\newtheorem{theorem}{Theorem}
\newtheorem{lemma}{Lemma}
\newtheorem{proposition}{Proposition}
\newtheorem{corollary}{Corollary}
\theoremstyle{definition}
\newtheorem{definition}{Definition}
\newcommand{\Prefix}{\mathsf{Prefix}}
\newcommand{\Substr}{\mathsf{Substr}}
\newcommand{\Suffix}{\mathsf{Suffix}}
\newcommand{\prev}{\mathsf{prev}}
\newcommand{\SHT}{\mathsf{SHT}}
\newcommand{\PPH}{\mathsf{PPH}}
\newcommand{\id}{\mathsf{id}}
\newcommand{\slink}{\mathsf{sl}}
\newcommand{\rslink}{\mathsf{rsl}}
\newcommand{\mrp}{\mathsf{mrp}}
\newcommand{\occ}{\mathit{occ}}
\newcommand{\pocc}{\mathit{pocc}}
\title{
  Right-to-left online construction of \\
  parameterized position heaps
}
\author{Noriki~Fujisato}
\author{Yuto~Nakashima}
\author{Shunsuke~Inenaga}
\author{Hideo~Bannai}
\author{Masayuki~Takeda}
\affil{\textit{Department of Informatics, Kyushu University, Japan} \\
\texttt{\small \{noriki.fujisato, yuto.nakashima, inenaga, bannai, takeda\}@inf.kyushu-u.ac.jp}
}
\date{}
\begin{document}
\maketitle

\begin{abstract}
	Two strings of equal length are said to \emph{parameterized match}
	if there is a bijection that maps the characters of one string to those
	of the other string, so that two strings become identical.
	The parameterized pattern matching problem is,
	given two strings $T$ and $P$, to find
	the occurrences of substrings in $T$ that parameterized match $P$.
	Diptarama et al. [Position Heaps for Parameterized Strings, CPM 2017]
	proposed an indexing data structure called \emph{parameterized position heaps},
	and gave a left-to-right online construction algorithm.
	In this paper, we present a \emph{right-to-left} online construction
	algorithm for parameterized position heaps.
	For a text string $T$ of length $n$ over two kinds of alphabets
	$\Sigma$ and $\Pi$ of respective size $\sigma$ and $\pi$,
	our construction algorithm runs in $O(n \log(\sigma + \pi))$ time
	with $O(n)$ space.
	Our right-to-left parameterized position heaps support
	pattern matching queries in $O(m \log (\sigma + \pi) + m \pi + \pocc))$ time,
	where $m$ is the length of a query pattern $P$
	and $\pocc$ is the number of occurrences to report.
	Our construction and pattern matching algorithms are
	as efficient as Diptarama et al.'s algorithms.
\end{abstract}

\section{Introduction}

\emph{Text indexing} is the task to preprocess the text string
so that subsequent pattern matching queries can be answered efficiently.
To date, a numerous number of text indexing structure for exact pattern matching
have been proposed, ranging from classical data structures such as suffix trees~\cite{Weiner},
directed acyclic word graphs~\cite{blumer85:_small_autom_recog_subwor_text,Blumer87},
and suffix arrays~\cite{manber93:_suffix},
to more advanced ones such as
compressed suffix arrays~\cite{GrossiV05} and FM index~\cite{FerraginaM05},
just to mention a few.

Ehrenfeucht et al.~\cite{ehrenfeucht_position_heaps_2011}
proposed a text indexing structure called \emph{position heaps}.
Ehrenfeucht et al.'s position heap is constructed in a \emph{right-to-left} online manner,
where a new node is incrementally inserted to the current position heap
for each decreasing position $i = n, \ldots, 1$ in the input string $T$ of length $n$.
In other words, Ehrenfeucht et al.'s position heap
is defined over a sequence $\langle \varepsilon, T[n..], \ldots, T[1..] \rangle$
of the suffixes of $T$ in increasing order of their length,
where $\varepsilon$ is the empty string of length $0$.
Kucherov~\cite{Kucherov13} proposed another variant of position heaps.
Kucherov's position heap is constructed in a \emph{left-to-right} online manner,
where a new node is incrementally inserted to the current position heap
for each increasing $i = 1, \ldots, n$.
In other words, Kucherov's position heap is defined over
a sequence $\langle T[1..], \ldots, T[n..], \varepsilon \rangle$
of the suffixes of $T$ in decreasing order of their length.
We will call Ehrenfeucht et al.'s position heap as the RL position heap,
and Kucherov's position heap as the LR position heap.
Both of the RL and LR position heaps for a text string $T$
of length $n$ require $O(n)$ space and can be constructed in $O(n \log \sigma)$ time,
where $\sigma$ is the alphabet size.
By augmenting the RL and LR position heaps of $T$
with auxiliary links called maximal reach pointers,
pattern matching queries can be answered in $O(m \log \sigma + \occ)$ time,
where $m$ is the length of a query pattern $P$
and $\occ$ is the number of occurrences of $P$ in $T$.

Nakashima et al.~\cite{position_heaps_of_trie_2012}
proposed position heaps for a set of strings
that is given as a reversed trie,
and proposed an algorithm that constructs
the position heap of a given trie in $O(\sigma N)$ time and space,
where $N$ is the size of the input trie.
Later, the same authors
showed how to construct the position heap
of a trie in $O(N)$ time and space,
for integer alphabets of size polynomialy bounded in $N$~\cite{NakashimaIIBT15}.

Baker~\cite{Baker96} introduced the \emph{parameterized pattern matching} problem,
that seeks for the occurrences of substrings of the text $T$
that have the ``same'' structures as the given pattern $P$.
Parameterized pattern matching is motivated
by e.g., software maintenance and plagiarism detection~\cite{Baker96}.
More formally, we consider two distinct alphabets $\Sigma$ and $\Pi$,
and we call an element over $\Sigma \cup \Pi$ a p-string.
The parameterized pattern matching problem is,
given two p-strings $T$ and $P$,
to find all occurrences of substrings $X$ of $T$
that can be transformed to $P$ by a bijection from $\Sigma \cup \Pi$ to $\Sigma \cup \Pi$
which is identity for $\Sigma$.
For instance,
if $T = \mathtt{abzaxxbyaxxbazzax}$ and $P = \mathtt{yazzbx}$
where $\Sigma = \{\mathtt{a, b}\}$
and $\Pi = \{\mathtt{x, y, z}\}$,
then the positions to output are $3$ and $8$.
To see why, observe that for the substring $T[3..8] = \mathtt{zaxxby}$
there is a bijection
$\mathtt{z} \rightarrow \mathtt{y}$,
$\mathtt{a} \rightarrow \mathtt{a}$,
$\mathtt{x} \rightarrow \mathtt{z}$,
$\mathtt{b} \rightarrow \mathtt{b}$, and
$\mathtt{x} \rightarrow \mathtt{y}$
that maps the substring to $P$.
Also, observe that for the other substring $T[8..13] = \mathtt{yaxxbz}$,
there is a bijection
$\mathtt{y} \rightarrow \mathtt{y}$,
$\mathtt{a} \rightarrow \mathtt{a}$,
$\mathtt{x} \rightarrow \mathtt{z}$,
$\mathtt{b} \rightarrow \mathtt{b}$, and
$\mathtt{z} \rightarrow \mathtt{x}$
that maps the substring to $P$ as well.

Of various algorithms and indexing structures
for the parameterized pattern matching (see~\cite{MendivelsoP15} for a survey),
we focus on Diptarama et al.'s \emph{parameterized position heaps}~\cite{DiptaramaKONS17}.
Diptarama et al.'s \emph{parameterized position heaps} are based
on Kucherov's LR position heaps,
which are constructed in a \emph{left-to-right} online manner.
Let us call their structure the \emph{LR p-position heaps}.
Diptarama et al. showed how to construct the LR p-position heap
for a given text of length $n$ in $O(n \log (\sigma + \pi))$ time
with $O(n)$ space, where $\sigma = |\Sigma|$ and $\pi = |\Pi|$.
They also showed that the LR p-position heap
augmented with maximal reach pointers can support
parameterized pattern matching queries in $O(m \log (\sigma + \pi) + m\pi + \pocc)$ time,
where $\pocc$ is the number of occurrences to report.

In this paper, we propose \emph{RL p-position heaps}
which are constructed in a \emph{right-to-left} online manner.
We show how to construct our RL position heap for a given text string $T$
of length $n$ in $O(n \log (\sigma + \pi))$ time with $O(n)$ space.
Our construction algorithm is based on
Ehrenfeucht et al.'s construction algorithm for
RL position heaps~\cite{ehrenfeucht_position_heaps_2011},
and Weiner's suffix tree construction algorithm~\cite{Weiner}.
Namely, we use reversed suffix links
defined for the nodes of RL p-position heaps.
The key to our algorithm is how to label the reversed suffix links,
which will be clarified in Definition~\ref{def:reversed_suffix_link}.
Using our RL p-position heap augmented with maximal reach pointers,
one can perform parameterized pattern matching queries
in $O(m \log (\sigma + \pi) + m\pi + \pocc)$ time.

\section{Preliminaries}

\subsection{Notations on strings}

Let $\Sigma$ and $\Pi$ be disjoint sets
called a \emph{static alphabet} and a \emph{parameterized alphabet},
respectively.
Let $\sigma = |\Sigma|$ and $\pi = |\Pi|$.
An element of $\Sigma$ is called an \emph{s-character},
and that of $\Pi$ is called a \emph{p-character}.
In the sequel, both an s-character and a p-character
are sometimes simply called a \emph{character}.
An element of $\Sigma^*$ is called a \emph{string},
and an element of $(\Sigma \cup \Pi)^*$ is called a \emph{p-string}.
The length of a (p-)string $S$ is the number of
characters contained in $S$.
The empty string $\varepsilon$ is a string of length 0,
namely, $|\varepsilon| = 0$.
For a (p-)string $S = XYZ$, $X$, $Y$ and $Z$ are called
a \emph{prefix}, \emph{substring}, and \emph{suffix} of $w$, respectively.
The set of prefixes, substrings, and suffixes
of a (p-)string $S$ is denoted by $\Prefix(S)$, $\Substr(S)$, and $\Suffix(S)$,
respectively.
The $i$-th character of a (p-)string $S$ is denoted by
$S[i]$ for $1 \leq i \leq |S|$,
and the substring of a (p-)string $S$ that begins at position $i$ and
ends at position $j$ is denoted by $S[i..j]$ for $1 \leq i \leq j \leq |S|$.
For convenience, let $S[i..j] = \varepsilon$ if $j < i$.
Also, let $S[i..] = S[i..|S|]$ for any $1 \leq i \leq |S|$.

\subsection{Parameterized pattern matching}

For any p-string $X$ and $f:(\Sigma \cup \Pi) \rightarrow (\Sigma \cup \Pi)$,
let $F(X) = f(X[1]) \cdots f(X[|X|])$.
Two p-strings $X$ and $Y$ of length $k$ each
are said to \emph{parameterized match} (\emph{p-match})
iff there is a bijection $f$ on $\Sigma \cup \Pi$
such that $f(a) = a$ for any $a \in \Sigma$
and $f(X[i]) = Y[i]$ for all $1 \leq i \leq k$.
For instance, if $\Sigma = \{\mathtt{a}, \mathtt{b}\}$ and $\Pi = \{\mathtt{x}, \mathtt{y}, \mathtt{z}\}$,
then $X = \mathtt{axbzzayx}$ and $Y = \mathtt{azbyyaxz}$ p-match
since there is a bijection $f$
such that $f(\mathtt{a}) = \mathtt{a}$, $f(\mathtt{b}) = \mathtt{b}$,
$f(\mathtt{x}) = \mathtt{z}$, $f(\mathtt{y}) = \mathtt{x}$, and $f(\mathtt{z}) = \mathtt{y}$
and $F(X) = F(\mathtt{axbzzayx}) = \mathtt{azbyyaxz} = Y$.
We write $X \approx Y$ iff $X$ and $Y$ p-match.

The \emph{previous encoding} $\prev(S)$ of a p-string $S$ of length $n$
is a sequence of length $n$ such that
the first occurrence of each p-character $x$
is replaced with $0$ and any other occurrence of $x$ is
replaced by the distance to the previous occurrence of $x$ in $S$,
and each s-character remains the same.
More formally, $\prev(S)$ is a sequence over $\Sigma \cup [0..n-1]$
of length $n$ such that for each $1 \leq i \leq n$,
\[
	\prev(S)[i] =
	\begin{cases}
		S[i] & \mbox{if } S[i] \in \Sigma,                                                                  \\
		0    & \mbox{if } S[i] \in \Pi \mbox{ and } S[i] \neq S[j] \mbox{ for any } 1 \leq j < i,           \\
		i-j  & \mbox{if } S[i] \in \Pi, S[i] = S[j] \mbox{ and } S[i] \neq S[k] \mbox{ for any } j < k < i.
	\end{cases}
\]

Observe that $X \approx Y$ iff $\prev(X) = \prev(Y)$.
Using the same example as above,
we have that $\prev(\mathtt{axbzzayx}) = \prev(\mathtt{azbyyaxz}) = \mathtt{a}0\mathtt{b}01\mathtt{a}06$.

Let $T$ and $P$ be p-strings of length $n$ and $m$, respectively,
where $n \geq m$.
The \emph{parameterized pattern matching} problem is
to find all positions $i$ in $T$ such that $T[i..i+m-1] \approx P$.

\section{Parameterized position heaps}

Let $\mathcal{S} = \langle S_1, \ldots, S_k \rangle$ be a sequence of strings
such that for any $1 < i \leq k$, 
$S_i \not \in \Prefix(S_j)$ for any $1 \leq j < i$.
For convenience, we assume that $S_1 = \varepsilon$.

\begin{definition}[Sequence hash trees~\cite{coffman}]
\label{def:seq_hash}
The \emph{sequence hash tree} of a sequence
$\mathcal{S} = \langle S_1, \ldots, S_k \rangle$ of strings,
denoted $\SHT(\mathcal{S})$, is a trie structure that is recursively defined as follows:
Let $\SHT(\mathcal{S})^{i} = (V_i, E_i)$.
Then 
\[
 \SHT(S)^{i} = 
  \begin{cases}
   (\{\varepsilon\}, \emptyset) & \mbox{if $i = 1$}, \\
   (V_{i-1} \cup \{p_{i}\}, E_{i-1} \cup \{(q_{i}, c, p_{i})\}) & \mbox{if $1 \leq i \leq k$},
  \end{cases}
\]
where $q_{i}$ is the longest prefix of $S_i$ which satisfies $q_{i} \in V_{i-1}$,
$c = S_i[|q_{i}|+1]$,
and $p_{i}$ is the shortest prefix of $S_i$ which satisfies $p_{i} \notin V_{i-1}$.
\end{definition}
Note that since we have assumed that each $S_i \in \mathcal{S}$ is not a prefix of 
$S_j$ for any $1 \leq j < i$, 
the new node $p_i$ and new edge $(q_{i}, c, p_{i})$ always exist 
for each $1 \leq i \leq k$.
Clearly $\SHT(\mathcal{S})$ contains $k$ nodes (including the root).

In what follows,
we will define our indexing data structure for a text p-string $T$ of length $n$.
Let $\mathcal{P}_T = \langle \varepsilon, \prev(T[n..]), \ldots, \prev(T[1..]) \rangle$
be the sequence of previous encoded suffixes of $T$
arranged in \emph{increasing} order of their length.
It is clear that $\prev(T[i..]) \notin \Prefix(\prev(T[j..]))$ for any $1 \leq j < i$
and $\prev(T[i..]) \notin \Prefix(\varepsilon)$
for any $1 \leq i \leq n$.
Hence we can naturally define the sequence hash tree for
$\mathcal{P}_T$, and we obtain our data structure:
\begin{definition}[Parameterized positions heaps]
\label{def:p_position_heap}
The \emph{parameterized position heap} (\emph{p-position heap})
for a p-string $T$, 
denoted $\PPH(T)$, is the sequence hash tree of $\mathcal{P}_T$
i.e., $\PPH(T) = \SHT(\mathcal{P}_T)$.
\end{definition}

\begin{figure}[tb]
   \centerline{
    \begin{tabular}{|l|r|} \hline
     $\prev(T[17..])$ & $\underline{0}$ \\ \hline
     $\prev(T[16..])$ & $\underline{00}$ \\ \hline
     $\prev(T[15..])$ & $\underline{\mathtt{a}}00$ \\ \hline
     $\prev(T[14..])$ & $\underline{0\mathtt{a}}03$ \\ \hline
     $\prev(T[13..])$ & $\underline{00\mathtt{a}}33$ \\ \hline
     $\prev(T[12..])$ & $\underline{000}\mathtt{a}33$ \\ \hline
     $\prev(T[11..])$ & $\underline{01}00\mathtt{a}33$ \\ \hline
     $\prev(T[10..])$ & $\underline{001}04\mathtt{a}33$ \\ \hline
     $\prev(T[9..])$ & $\underline{010}104\mathtt{a}33$ \\ \hline
     $\prev(T[8..])$ & $\underline{0013}104\mathtt{a}33$ \\ \hline
     $\prev(T[7..])$ & $\underline{0101}3104\mathtt{a}33$ \\ \hline
     $\prev(T[6..])$ & $\underline{00131}3104\mathtt{a}33$ \\ \hline
     $\prev(T[5..])$ & $\underline{01013}13104\mathtt{a}33$ \\ \hline
     $\prev(T[4..])$ & $\underline{001313}13104\mathtt{a}33$ \\ \hline
     $\prev(T[3..])$ & $\underline{002}131313104\mathtt{a}33$ \\ \hline
     $\prev(T[2..])$ & $\underline{0022}131313104\mathtt{a}33$ \\ \hline
     $\prev(T[1..])$ & $\underline{\mathtt{a}0}022131313104\mathtt{a}33$ \\ \hline       
    \end{tabular}
    \hfill
    \raisebox{-27mm}{\includegraphics[scale=0.65]{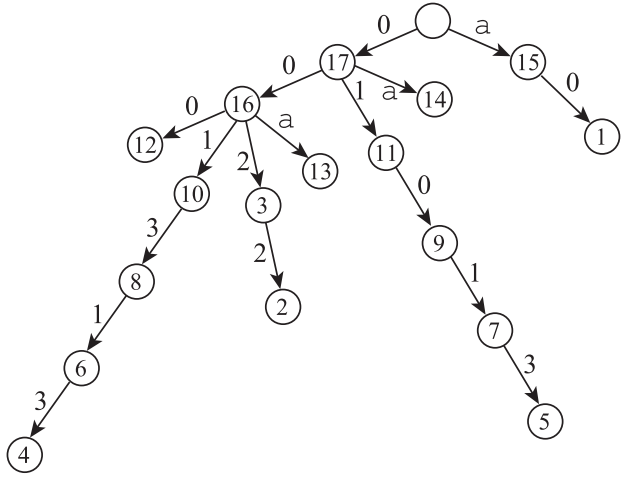}}
  }
   \caption{To the left is the list of $\prev(T[i..])$
     for p-string $T = \mathtt{axyxyyxxyyxxzyazy}$ of length $17$,
     where $\Sigma = \{\mathtt{a}\}$ and $\Pi = \{\mathtt{x}, \mathtt{y}, \mathtt{z}\}$.
     To the right is an illustration for $\PPH(T)$.
     The underlined prefix of each $\prev(T[i..])$ in the left list
     denotes the longest prefix of $\prev(T[i..])$ that was inserted to
     $\PPH(T[i+1..])$ and hence, the node with id $i$
     represents this underlined prefix of $\prev(T[i..])$.
}
  \label{fig:p-position_heap}
\end{figure}
See Figure~\ref{fig:p-position_heap} for an example of our p-position heap.

Note that we can obtain $\mathcal{P}_{T[i-1..]}$
by adding $\prev(T[i-1..])$ at the beginning of $\mathcal{P}_{T[i..]}$.
This also means that $\PPH(T[i..]) = \SHT(\mathcal{P}_{T[i..]})$
for each $1 \leq i \leq n$.
Hence, we can construct $\PPH(T)$
by processing the input string $T$ from right to left.
We remark that we can easily compute $\prev(T[i-1..])$ from $\prev(T[i..])$
in a total of $O(n \log \pi)$ time for all $2 \leq i \leq n$
using $O(\min\{\pi, n\})$ extra space,
e.g., by maintaining a balanced search tree
that stores the distinct p-characters that have occurred in $T[i..]$
and records the leftmost occurrences of these p-character in the nodes.

Diptarama et al.~\cite{DiptaramaKONS17}
proposed another version of parameterized position heap
for a sequence of previous encoded suffixes of the input p-string $T$
arranged in \emph{decreasing} order of their length.
Since their algorithm processes $T$ from left to right,
we sometimes call their structure as a \emph{left-to-right p-position heap}
(\emph{LR p-position heap}),
while we call our $\PPH(T)$ as a \emph{right-to-left p-position heap}
(\emph{RL p-position heap}) since our construction algorithm
processes $T$ from right to left.

For any p-string $P \in (\Sigma \cup [0..n-1])^+$,
we say that $P$ is \emph{represented} by $\PPH(T)$
iff $\PPH(T)$ has a path which starts from the root and spells out $P$.

\begin{lemma} \label{lem:position_nodes_correspondence}
  For any string $T$ of length $n$,
  $\PPH(T)$ consists of exactly $n+1$ nodes.
  Also, there is a one-to-one correspondence between
  the positions $1, \ldots, n$ in $T$
  and the non-root nodes of $\PPH(T)$.
\end{lemma}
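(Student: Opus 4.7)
The plan is to proceed directly from the definition $\PPH(T) = \SHT(\mathcal{P}_T)$ together with the remark following Definition~\ref{def:seq_hash} that each step of the recursive construction introduces exactly one new node (and one new edge), provided the prefix-freeness condition on the input sequence holds.

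First I would count the nodes. The sequence $\mathcal{P}_T = \langle \varepsilon, \prev(T[n..]), \ldots, \prev(T[1..]) \rangle$ has exactly $n+1$ entries. The paragraph preceding Definition~\ref{def:p_position_heap} has already verified that $\prev(T[i..]) \notin \Prefix(\prev(T[j..]))$ for any $j > i$ and $\prev(T[i..]) \notin \Prefix(\varepsilon)$ for any $1 \leq i \leq n$, which is exactly the prefix-freeness assumption required by Definition~\ref{def:seq_hash}. Hence each step of the recursion strictly enlarges the node set by one. Starting from $|V_1| = 1$ (the single root $\varepsilon$), a straightforward induction on $i$ yields $|V_{n+1}| = n+1$.

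For the bijection, I would associate position $i \in \{1, \ldots, n\}$ with the unique new node $p_i$ created at the step in Definition~\ref{def:seq_hash} when $\prev(T[i..])$ is processed. This map is well defined because exactly one node is born at each step. It is injective since distinct recursion steps introduce distinct nodes, and surjective onto the non-root nodes: the root is the only node present before any suffix is inserted, and the remaining $n$ non-root nodes are produced by exactly the $n$ insertions corresponding to positions $1, \ldots, n$. This $i \leftrightarrow p_i$ identification also justifies the node labelling used in Figure~\ref{fig:p-position_heap}.

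The main potential obstacle is checking the prefix-freeness hypothesis of Definition~\ref{def:seq_hash} on $\mathcal{P}_T$, but this has already been dispatched in the text, so there is essentially nothing to do beyond unpacking definitions. I would therefore keep the write-up short, emphasising the bijection $i \leftrightarrow p_i$ since subsequent sections will rely on node ids coinciding with text positions.
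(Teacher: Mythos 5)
Your proposal is correct and follows essentially the same route as the paper: both arguments observe that the prefix-freeness (equivalently, the strict length increase of successive encoded suffixes) guarantees that each step of the sequence-hash-tree construction inserts exactly one new node, which is then identified with the corresponding text position. The paper re-derives this via the length comparison $|\prev(T[i..])| > |\prev(T[j..])|$ for $i<j$, whereas you cite the prefix-freeness already checked before Definition~\ref{def:p_position_heap}; the content is the same.
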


\begin{proof}
  Initially, $\PPH(\varepsilon)$ consists only of the root
  that represents $\varepsilon$.
  For each $1 \leq i \leq n$,
  since $|\prev(T[i..])| = n-i+1 > n-j+1 = |\prev(T[j..])|$
  for any $1 \leq i < j \leq n$,
  it is clear that there is a prefix of $\prev(T[i..])$
  that is not represented by $\PPH(T[i+1..])$.
  Therefore, when we construct $\PPH(T[i..])$ from $\PPH(T[i+1..])$,
  then exactly one node is inserted, which corresponds to position $i$.
\end{proof}

Let $V$ be the set nodes of $\PPH(T)$.
Based on Lemma~\ref{lem:position_nodes_correspondence},
we define a bijection $\id:V \rightarrow [0..n]$ such that
$\id(r) = 0$ for the root $r$ and 
$\id(v) = i$ iff $v$ was the node that was inserted
when constructing $\PPH(T[i..])$ from $\PPH(T[i+1..])$.

Unlike our RL p-position heap,
Diptarama et al.'s LR p-position heap can have
\emph{double nodes} to which two positions of the text p-string are associated.

We remark that the pattern matching algorithm of
Diptarama et al.~\cite{DiptaramaKONS17}
can be applied to our RL p-position heap $\PPH(T)$ for a text p-string $T$,
and this way one can solve the parameterized pattern matching problem
in $O(m \log (\sigma + \pi) + m\pi + \occ)$ time,
where $\occ$ is the number of positions in text $T$
such that the pattern p-string $P$ of length $m$
and the corresponding substring $T[i..i+m-1]$ p-match.
We note that since our RL p-position heap does not have double nodes,
the pattern matching algorithm can be somewhat simplified.

The following lemma is an analogue
to Lemma 6 of~\cite{DiptaramaKONS17} for
Diptarama et al.'s LR p-position heap.

\begin{lemma} \label{lem:substring_containment}
For any $1 \leq i \leq j \leq n$
if $\prev(T[i..j])$ is represented by $\PPH(T)$,
then for any substring $X$ of $T[i..j]$,
$\prev(X)$ is represented by $\PPH(T)$.
\end{lemma}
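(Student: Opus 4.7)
The plan is to decompose the claim into two iterable reductions and apply them alternately. The \emph{right-trim} reduction is immediate: $\prev(T[i..j'])$ equals the length-$(j'-i+1)$ prefix of $\prev(T[i..j])$, because the previous encoding of an initial segment is the corresponding initial segment of the previous encoding; since $\PPH(T)$ is a trie, every prefix of a represented string labels an ancestor and is thus represented. Consequently it suffices to establish the \emph{left-trim} reduction: if $i < j$ and $\prev(T[i..j])$ is represented by $\PPH(T)$, then $\prev(T[i+1..j])$ is also represented. Iterating left-trim together with prefix-closure yields $\prev(T[i_0..j_0])$ represented for every $i \leq i_0 \leq j_0 \leq j$.

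For the left-trim reduction, I would use Lemma~\ref{lem:position_nodes_correspondence} to identify the node $v$ of $\PPH(T)$ at which the path spelling $\prev(T[i..j])$ terminates, and set $i' = \id(v)$. By Definition~\ref{def:seq_hash}, the label of $v$ is the shortest prefix of $\prev(T[i'..])$ not already represented by $\PPH(T[i'+1..])$, so in particular it is a prefix of $\prev(T[i'..])$; combined with the hypothesis that this label equals $\prev(T[i..j])$, we obtain $\prev(T[i..j]) = \prev(T[i'..i'+j-i])$, i.e.\ $T[i..j] \approx T[i'..i'+j-i]$. Restricting the witnessing bijection to the last $j-i$ characters gives $T[i+1..j] \approx T[i'+1..i'+j-i]$, so $\prev(T[i+1..j])$ coincides with the length-$(j-i)$ prefix of $\prev(T[i'+1..])$. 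The natural witness for representability of this prefix in $\PPH(T)$ is the position node corresponding to $i'+1$: once its depth is at least $j-i$, the desired prefix labels one of its ancestors and is represented.

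The main obstacle is therefore the depth inequality that, for every $k < n$, the position node for $k+1$ has depth at least (depth of the position node for $k$) minus $1$; this is the parameterized analogue of the standard ``depth decreases by at most one along a suffix link.''  I plan to prove it by reverse induction on $k$ (following the right-to-left construction order), using that whenever the node for $k$ has depth $d$, the length-$(d-1)$ prefix of $\prev(T[k..])$ is labeled by some node whose id $k''$ strictly exceeds $k$; the inductive hypothesis at $k''$, transported along the p-match $T[k..k+d-2] \approx T[k''..k''+d-2]$, yields the required ancestor of the node for $k+1$. The subtlety specific to the parameterized setting is that $\prev(T[k+1..])$ is not merely the tail of $\prev(T[k..])$: when $T[k] \in \Pi$ reappears in $T[k+1..]$, the value at its first reappearance changes from a positive integer to $0$, so care is needed to pin down which prefixes of the two prev encodings actually coincide.
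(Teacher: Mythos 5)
Your proposal is correct, but it reaches the suffix-trimming step by a genuinely different decomposition from the paper's. The paper proves left-trimming by walking down the path that spells $\prev(T[i..j])$: it records the ids $b_0 > b_1 > \cdots > b_{j-i}$ of the nodes on that path and inducts on the depth $k$, showing that $\prev(T[i+1..i+k]) = \prev(T[b_k+1..b_k+k])$ is represented because the witness found at depth $k$ has a larger id than $b_{k+1}+1$ and was therefore already present when the node with id $b_{k+1}+1$ was inserted. You instead jump once to $i' = \id(v)$ for the single node $v$ spelling $\prev(T[i..j])$ and reduce everything to the global inequality that the node with id $k+1$ has depth at least one less than the node with id $k$, proved by reverse induction over text positions. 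Both arguments rest on the same mechanism --- transport representability along a p-match and use the id ordering to certify that the transported prefix already existed at insertion time --- but yours isolates a reusable structural fact (the parameterized analogue of ``a suffix link drops the depth by at most one,'' which is also what underlies Proposition~\ref{prop:suffix_link_monotonicity} and the amortization in Lemma~\ref{lem:construction_complexity}), at the price of an induction over all of $T$ rather than only over the path at hand. When you write up the inductive step of the depth inequality, make the load-bearing point explicit: the witness node for the length-$(d-2)$ prefix of $\prev(T[k+1..])$ is an ancestor-or-self of the node whose id is $k''+1$, hence has id at least $k+2$; it is this fact, and not mere representability in the final $\PPH(T)$, that guarantees the witness was already present in $\PPH(T[k+2..])$ and forces the node with id $k+1$ to be attached at depth at least $d-1$. (Also handle the trivial edge cases $d \leq 1$ and $k'' = n$ separately, where the claim holds without the induction.)
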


\begin{proof}
  The lemma can be shown in a similar way to Lemma 6 of~\cite{DiptaramaKONS17}.
  For the sake of completeness, we provide a full proof below.
  
  First, we show that for any proper prefix $T[i..i+k]$ of $T[i..j]$
  with $0 \leq k < j-i$,
  $\prev(T[i..i+k])$ is represented by $\PPH(T)$.
  It follows from the definition of previous encoding
  that $\prev(T[i..i+k]) = \prev(T[i..j])[1..k+1]$,
  and hence $\prev(T[i..i+k])$ is a prefix of $\prev(T[i..j])$.
  Since $\prev(T[i..j])$ is represented by $\PPH(T)$
  and $i \leq i + k < j$,
  $\prev(T[i..i+k])$ is also represented by $\PPH(T)$.

  Now it suffices for us to show that
  for any proper suffix $T[i+h..j]$ of $T[i..j]$ with $0 < h \leq j-i$,
  $\prev(T[i+h..j])$ is represented by $\PPH(T)$,
  since then we can inductively apply the above discussion for the prefixes.
  By the above discussions for the prefixes of $T[i..j]$,
  there exist positions $i = b_{j-i} < \cdots < b_0 \leq n$ in $T$ such that
  $\prev(T[i..i+k]) = \prev(T[b_k..b_k+k])$ for $0 \leq k \leq j-i$.
  By the definition of $\PPH(T)$,
  the root has an out-going edge labeled by $\prev(T[b_1+1..b_1+1])$,
  and this is the base case for our induction.
  Since $\prev(T[i..i+k]) = \prev(T[b_k..b_k+k])$,
  we have $\prev(T[i+1..i+k]) = \prev(T[b_k+1..b_k+k])$.
  Now since $\prev(T[b_{k+1}+1..b_{k+1}+k+1]) = \prev(T[i+1..i+k+1])$
  and $\prev(T[b_k+1..b_k+k]) = \prev(T[i+1..i+k])$,
  $\prev(T[b_k+1..b_k+k])$ is a prefix of $\prev(T[b_{k+1}+1..b_{k+1}+k+1])$.
  This implies that if $\prev(T[b_k+1..b_k+k])$ is represented by $\PPH(T)$,
  then $\prev(T[b_{k+1}+1..b_{k+1}+(k+1)])$ is also represented by $\PPH(T)$.
  By induction, we have that $\prev(T[b_{j-i}+1..b_{j-i}+j-i]) = \prev(T[i+1..j])$
  is represented by $\PPH(T)$.
  Applying the same argument inductively,
  it is immediate that $\prev(T[i+h...j])$ with $2 \leq h \leq j-i$
  are also represented by $\PPH(T)$.  
\end{proof}

In the next section,
we show how to construct our RL p-position heap
$\PPH(T)$ for an input text p-string $T$ of length $n$
in $O(n \log (\sigma + \pi))$ time and $O(n)$ space.

\section{Right to left construction of parameterized position heaps}

In this section,
we present our algorithm which constructs
$\PPH(T)$ of a given p-string $T$ in a right-to-left online manner.
The key to our construction algorithm is
the use of \emph{reversed suffix links}, which will be defined in the following subsection.

\subsection{Reversed suffix links}

For convenience,
we will sometimes identify each node $v$ of $\PPH(T)$ with the path label from the root to $v$.
In our right-to-left online construction of $\PPH(T)$,
we use the \emph{reversed suffix links},
which are a generalization of the \emph{Weiner links} that are used
in right-to-left construction of the suffix tree~\cite{Weiner}
for (standard) string matching:

\begin{definition}[Reversed suffix links]
\label{def:reversed_suffix_link}
For any node $v$ of $\PPH(T)$ and a character $a \in \Sigma \cup [0..n-1]$,
let
\[
 \rslink(a, v) =
 \begin{cases}
   av & \mbox{if $a \in \Sigma \cup \{0\}$ and $av$ is represented by $\PPH(T)$},\\
   u  & \begin{array}{l}
         \mbox{if $a \in [1..n-1]$, $v[a] = 0$ and} \\
         \mbox{$u = 0v[1..a-1]av[a+1..|v|]$ is represented by $\PPH(T)$},
        \end{array}
     \\
   \mbox{undefined} & \mbox{otherwise}.
 \end{cases}
\]
\end{definition}

\begin{figure}[tb]
   \centerline{
    \includegraphics[scale=0.65]{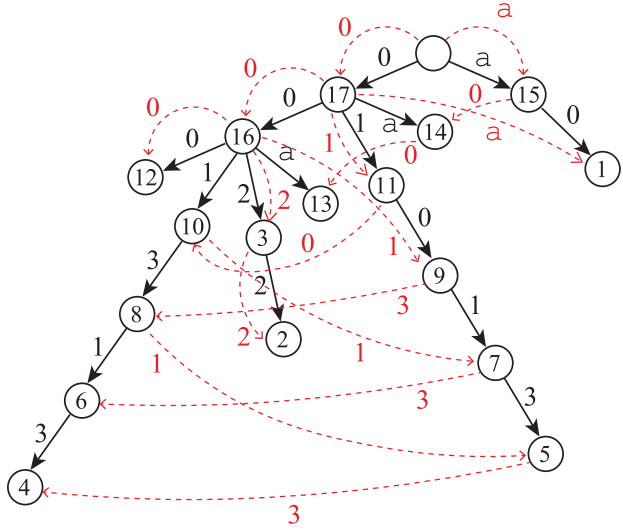}
  }
   \caption{Illustration of the reversed suffix links of
     $\PPH(T)$ with the same p-string $T = \mathtt{axyxyyxxyyxxzyazy}$
     as in Figure~\ref{fig:p-position_heap}. The reversed suffix links
     and their labels are shown in red.}
  \label{fig:suffix_link}
\end{figure}

It is clear that by taking one $\rslink$ link from a node,
then the node depth (and hence the string length) increases exactly one.

Observe that the first case of of the definition of $\rslink(a, v)$
is a direct extension of the Weiner links,
where $\rslink(a, v)$ points to the node $av$
that is obtained by prepending $a$ to $v$.
The second case, however, is a special case that arises
in parameterized pattern matching.
The following lemma ensures that our reversed suffix links $\rslink$ are well defined:
\begin{lemma} \label{lem:reversed_suffix_link_well_defined}
  For any node $v$ in $\PPH(T)$ and a character $a \in \Sigma \cup [0..n-1]$,
  let $\rslink(a, v) = u$, where $u$ is a node of $\PPH(T)$.
  Then, for any string $X$ such that $\prev(X) = u$,
  $\prev(X[2..|X|]) = v$.
\end{lemma}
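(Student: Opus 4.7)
The plan is to prove the lemma by splitting on the two defining cases of $\rslink(a,v)$, and in each case verifying that stripping the first character from any $X$ with $\prev(X)=u$ produces a string whose previous encoding equals $v$.

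In the first case, where $a \in \Sigma \cup \{0\}$ and $u = av$, I would split again. If $a \in \Sigma$ then $X[1]=a$ is an s-character, so dropping it does not affect the previous encoding of any later position and $\prev(X[2..|X|]) = \prev(X)[2..|X|] = v$. If $a = 0$, then $X[1]$ is a p-character whose first occurrence in $X$ is at position $1$. The key observation is that $X[1]$ cannot reappear later in $X$: if it did, and its first reappearance were at position $j>1$, we would have $u[j] = j-1$; but $u[j] = v[j-1]$, and the previous-encoding entry $v[j-1]$ is either $0$ or the distance to an occurrence at some position $\ge 1$ of $v$, so $v[j-1] \le j-2 < j-1$, a contradiction. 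Since $X[1]$ does not reappear in $X$, removing it again leaves all later previous-encoding values unchanged, yielding $\prev(X[2..|X|]) = v$.

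The second case, where $a \in [1..n-1]$, $v[a]=0$, and $u = 0\,v[1..a-1]\,a\,v[a+1..|v|]$, is where the real work lies. From $u[1]=0$ and $u[a+1]=a$ I would first deduce that $X[1]$ is a p-character and that $X[a+1]=X[1]$ is its first reappearance in $X$. Writing $Y = X[2..|X|]$, I would then check $\prev(Y) = v$ position by position in three ranges. (i) For $1 \le i \le a-1$, $X[i+1]$ differs from $X[1]$, and its most recent previous occurrence in $X$, if any, lies in $X[2..i]$, so the back-pointer distance is preserved: $\prev(Y)[i] = \prev(X)[i+1] = u[i+1] = v[i]$. (ii) At $i=a$, we have $Y[a] = X[a+1] = X[1]$, whose only earlier occurrence in $X$ was $X[1]$, which is now dropped, so $\prev(Y)[a] = 0 = v[a]$. (iii) For $a+1 \le i \le |v|$, either $X[i+1]$'s most recent prior occurrence in $X$ is at some position $k \ge 2$, in which case it shifts to position $k-1$ in $Y$ while the distance $i+1-k$ is preserved, or $X[i+1] = X[1]$, in which case the most recent prior occurrence in $X$ is $X[a+1]$ (not $X[1]$, since $X[a+1]$ is more recent) which shifts to $Y[a]$ with the same distance; in both subcases $\prev(Y)[i] = u[i+1] = v[i]$.

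The main obstacle I anticipate is the bookkeeping in range (iii) of the second case, where one must carefully justify that after deleting $X[1]$ the back-pointers of all subsequent p-characters still agree with the entries of $v$, using crucially that $X[a+1]$ is the first post-position-$1$ occurrence of $X[1]$ so that no previous-occurrence pointer in $X$ (other than the one at position $a+1$ itself, which we explicitly rewrote to $0$) targets position $1$. Once that is set up, the identity $\prev(Y) = v$ follows by direct coordinate comparison with the explicit form of $u$ given in the definition.
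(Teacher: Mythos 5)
Your proposal is correct and follows essentially the same route as the paper's proof: a case split on the two defining clauses of $\rslink(a,v)$ followed by a direct comparison of $\prev(X[2..|X|])$ with $v$. You actually supply more detail than the paper does --- in particular, the observation that when $a=0$ the character $X[1]$ cannot reappear (since $u[j]=v[j-1]\le j-2$ would contradict $u[j]=j-1$) is silently assumed in the paper's one-line identity $\prev(X[2..|X|])=\prev(X)[2..|X|]$, and your position-by-position check in the second case makes explicit what the paper asserts in a single displayed equality.
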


\begin{proof}
  In the first case of the definition of $\rslink(a, v)$
  where $a \in \Sigma \cup \{0\}$, we have $\prev(X) = u = av$.
  Hence, $\prev(X[2..|X|]) = \prev(X)[2..|X|] = u[2..|u|] = v$.

  In the second case of the definition of $\rslink(a, v)$ where $a \in [1..n-1]$,
  we have $\prev(X) = u = 0v[1..a-1]av[a+1..|v|]$,
  which implies that $X[1] = X[a+1]$
  and $X[1] \neq X[i]$ for any $2 \leq i \leq a$.
  Thus, $\prev(X[2..|X|]) = v[1..a-1]0v[a+1..|v|] = v$.
\end{proof}

The next proposition shows
that there is a monotonicity in the labels of the reversed suffix links
that come from the nodes in the same path of $\PPH(T)$.

\begin{proposition} \label{prop:suffix_link_monotonicity}
  Suppose there is a reversed suffix link
  $\rslink(a, v)$ of a node $v$ with $a \in \Sigma \cup [0..n-1]$.
  Let $u$ be any ancestor of $v$.
  Then, if $a \in \Sigma \cup \{0\}$, $u$ has a reversed suffix link $\rslink(a, u)$.
  Also, if $a \in [1..n-1]$ and $|u| \geq a$, then $u$ has a reversed suffix link $\rslink(a, u)$,
  and if $a \in [1..n-1]$ and $|u| < a$, then $u$ has a reversed suffix link $\rslink(0, u)$.
\end{proposition}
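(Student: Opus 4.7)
The plan is to use two simple facts: (i) since $u$ is an ancestor of $v$ in $\PPH(T)$, the path label $u$ is a prefix of the path label $v$, and (ii) any prefix of a string represented by $\PPH(T)$ is itself represented (a path from the root that merely stops earlier). Given these, the proposition reduces to checking, in each case of Definition~\ref{def:reversed_suffix_link}, that the string one would use to witness $\rslink(a,u)$ (resp.~$\rslink(0,u)$) is a prefix of the string that already witnesses $\rslink(a,v)$. I would then conclude by invoking (ii).

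I would split into the three cases that mirror the statement. For $a\in\Sigma\cup\{0\}$, the hypothesis gives that $av$ is represented by $\PPH(T)$. Because $u$ is a prefix of $v$, the string $au$ is a prefix of $av$, hence represented, so $\rslink(a,u)=au$ is defined by the first clause of Definition~\ref{def:reversed_suffix_link}. For $a\in[1..n-1]$ with $|u|\geq a$, the hypothesis gives $v[a]=0$ and that $0v[1..a-1]\,a\,v[a+1..|v|]$ is represented. Since $u$ is a prefix of $v$ with $|u|\geq a$, we have $u[1..a-1]=v[1..a-1]$ and $u[a]=v[a]=0$, so $0u[1..a-1]\,a\,u[a+1..|u|]$ is a prefix of $0v[1..a-1]\,a\,v[a+1..|v|]$ and is therefore represented; this gives $\rslink(a,u)$ by the second clause.

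The last case is the one requiring a small extra observation. For $a\in[1..n-1]$ with $|u|<a$, the string $u$ is actually a prefix of $v[1..a-1]$ (all characters involved lie strictly before position $a$), so $0u$ is a prefix of $0v[1..a-1]\,a\,v[a+1..|v|]$. The latter is represented by hypothesis, hence so is $0u$, and since $0\in\Sigma\cup\{0\}$ the first clause of Definition~\ref{def:reversed_suffix_link} yields $\rslink(0,u)=0u$.

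The proof is essentially bookkeeping, and the only step that requires care is the second case, where one must verify that prepending a $0$ and then overwriting position $a+1$ by $a$ is compatible with $u$ being a prefix of $v$; this works exactly because $|u|\ge a$ forces $u[a]=v[a]=0$, so the overwrite happens inside the common prefix. Once that is checked, (ii) finishes every case uniformly.
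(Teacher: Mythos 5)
Your proof is correct and rests on the same key observation as the paper's: the string that witnesses the reversed suffix link at the ancestor is a prefix of the string witnessing $\rslink(a,v)$, and prefixes of represented strings are represented. The only (cosmetic) difference is that you argue directly for an arbitrary ancestor $u$, whereas the paper reduces to the parent $v'$ of $v$ and then inducts up the tree; your version handles the whole case $|u|<a$ in one step where the paper only needs to treat the boundary $|v'|=a-1$ at the transition.
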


\begin{proof}
  It suffices for us to show that the lemma holds for
  the parent $v'$ of $v$,
  since then the lemma inductively holds for any ancestor of $v$.
  Note that $v' = v[1..|v|-1]$.  Let $w = \rslink(a, v)$.

  If $a \in \Sigma \cup \{0\}$, then $w = av$.
  Hence, the parent of $w$ is $w[1..|w|-1] = av[1..|v|-1] = av'$.
  Therefore, there is a reversed suffix link $\rslink(a, v')$.

  If $a \in [1..n-1]$ and $|v'| = |v|-1 \geq a$,
  then it follows from the definition of $\rslink(a, v)$ that
  $v[a] = 0$ and $w = 0v[1..a-1]av[a+1..|v|]$. 
  Since $|v'| \geq a$, we have that $v'[a] = 0$ and $|v| \geq a+1$.  
  Thus $w[1..|w|-1] = 0v[1..a-1]av[a+1..|v|-1]$ is represented by $\PPH(T)$.
  Consequently, there is a reversed suffix link $\rslink(a, v')$.

  If $a \in [1..n-1]$ and $|v'| = |v|-1 = a-1$,
  then it follows from the definition of $\rslink(a, v)$ that
  $v[a] = v[|v|] = 0$ and $w = 0v[1..|v|-1]a$.
  Thus $w[1..|w|-1] = 0v[1..|v|-1] = 0v'$ is represented by $\PPH(T)$.
  Consequently, there is a reversed suffix link $\rslink(a, v')$.
\end{proof}

\subsection{Adding a new node}

Our algorithm processes 
a given p-string $T$ of length $n$ from right to left
and maintains $\PPH(T[i..])$ in decreasing order of $i = n, \ldots, 1$.
Initially, we begin with $\PPH(\varepsilon)$ which consists of the root $r$
representing the empty string $\varepsilon$.
For convenience, we use an auxiliary node $\bot$
as a parent of the root $r$,
and create reversed suffix links $\rslink(a, \bot) = r$
for every $a \in \Sigma \cup \{0\}$.

Now suppose we have constructed $\PPH(T[i..])$ for $1 < i \leq n$,
and we will update it to $\PPH(T[i-1..])$.
In so doing, we begin with node $v_i$ such that $\id(v_i) = i$.
We know the locus of this node $v_i$ since
$v_i$ is the node that was inserted at the last step
when $\PPH(T[i..])$ was constructed from $\PPH(T[i+1..])$.
Note also that this node $v_i$ is a leaf in $\PPH(T[i..])$.
We climb up the path from $v_i$ until finding
its lowest ancestor $v'_i$ that satisfies the following.
There are three cases:
\begin{enumerate}
 \item If $T[i-1] \in \Sigma$, then $v'_i$ is the lowest ancestor of $v_i$
       such that $\rslink(T[i-1], v_i)$ is defined.
 \item If $T[i-1] \in \Pi$ and $T[i-1] \neq T[j]$ for any $i \leq j \leq n$,
       then $v'_i$ is the lowest ancestor of $v_i$
       such that $\rslink(0, v_i)$ is defined.
 \item Otherwise,
       let $d = j-i$ where
       $j$ is the smallest position such that $i \leq j \leq n$ and $T[i-1] = T[j]$.
       Then $v'_i$ is the lowest ancestor of $v_i$
       such that $\rslink(d, v'_i)$ is defined if it exists,
       and $v'_i$ is the lowest ancestor of $v_i$ such that $\rslink(0, v'_i)$ is defined
       otherwise.
\end{enumerate}
Let $u_i$ be the node of $\PPH(T[i..])$
that is pointed by the reversed suffix link of $v'_i$ as above.
Then, we create a new node $v_{i-1}$ as a child of $u_i$
such that $\id(v_{i-1}) = i-1$.
The new edge $(u_i, v_{i-1})$ is labeled by $\prev(T[i-1..])[|u_i|+1]$.
We repeat the above procedure for all positions $i$ in $T$ in decreasing order.
See also Figure~\ref{fig:snapshot_const} for concrete examples.

\begin{figure}[tbp]
  \centerline{
    \includegraphics[scale=0.55]{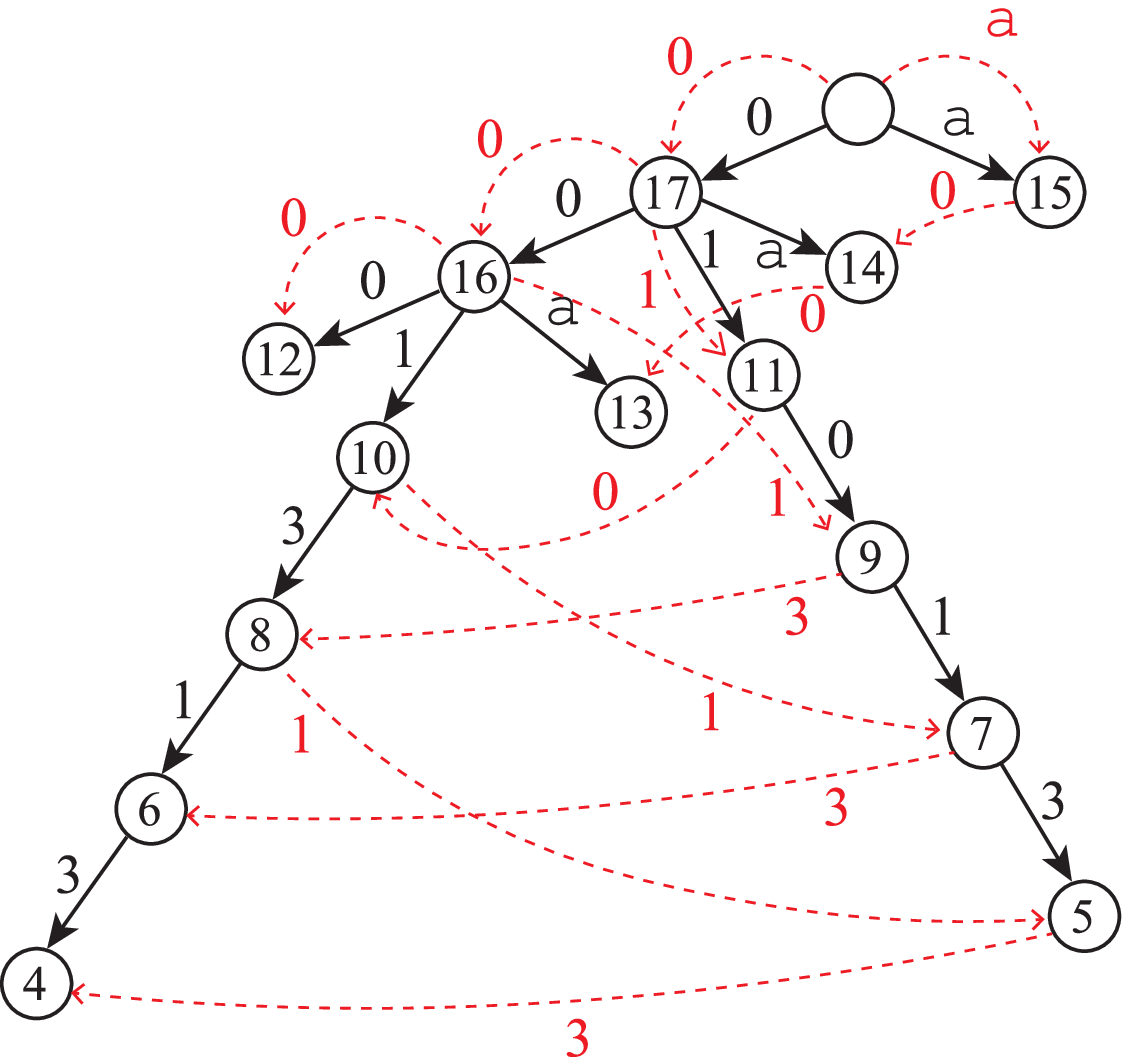}
    \hfill
    \includegraphics[scale=0.55]{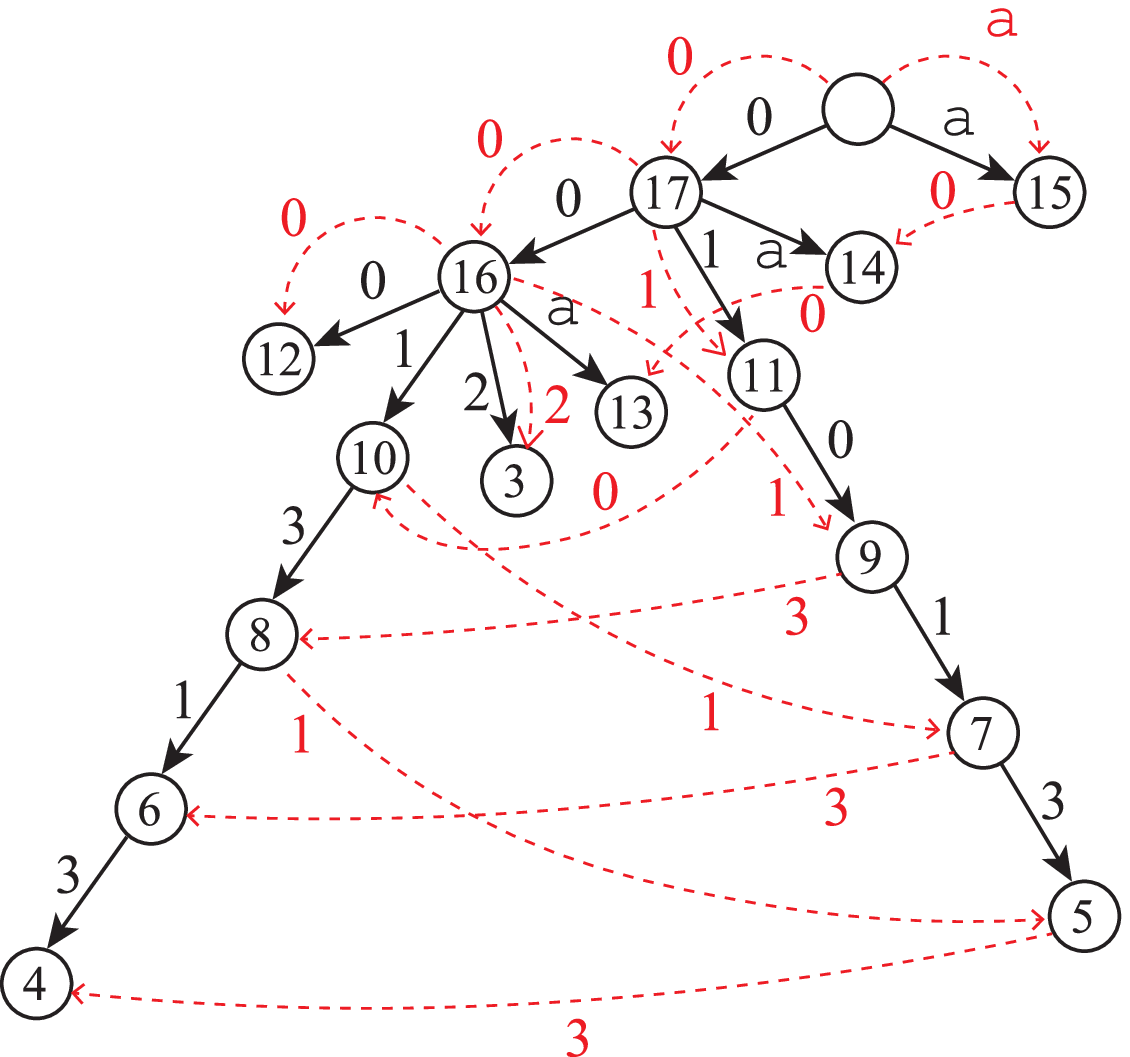}
  }
  \vspace*{5mm}
  \centerline{
    \includegraphics[scale=0.55]{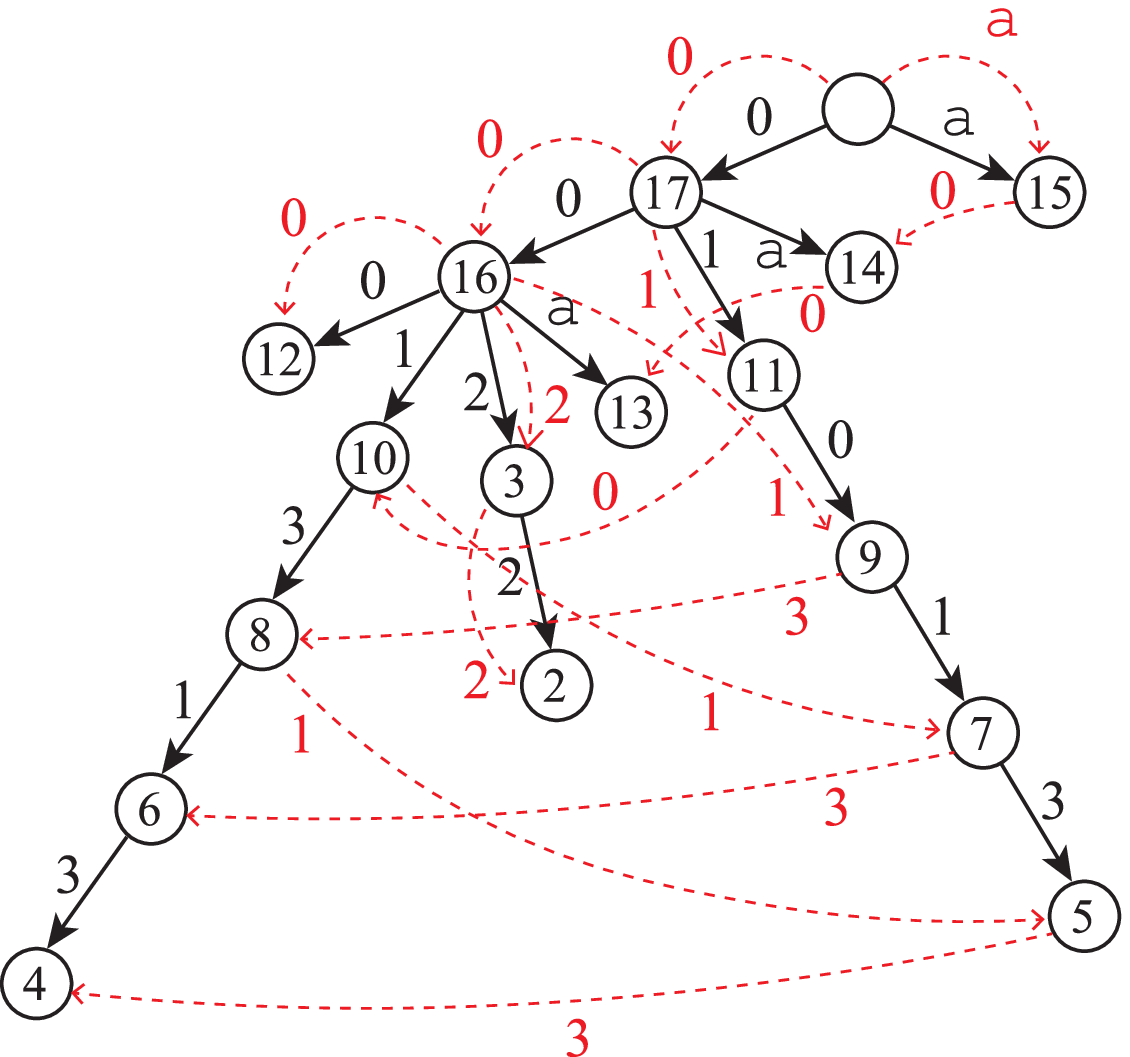}
    \hfill
    \includegraphics[scale=0.55]{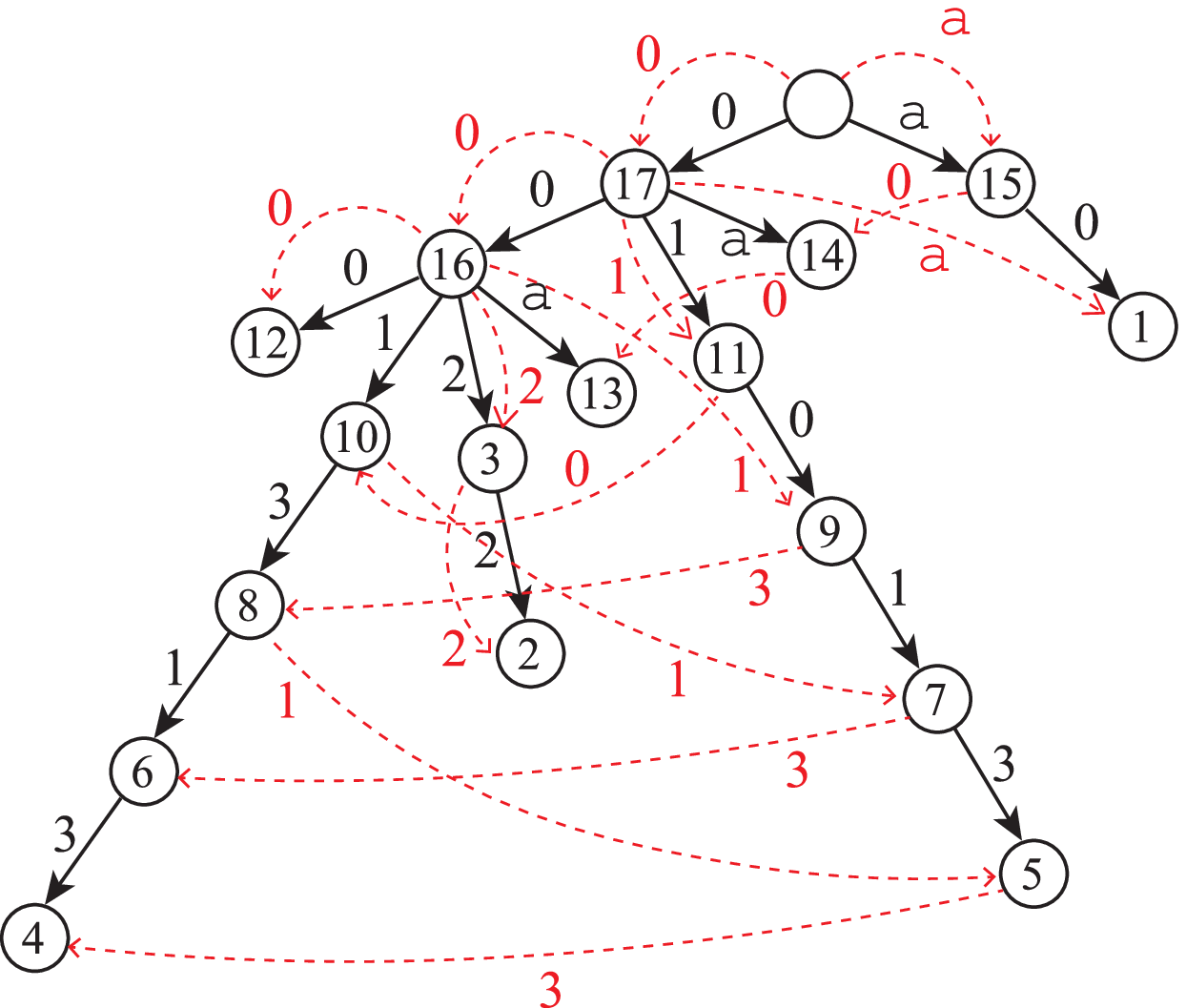}
  }
  \caption{A snapshot of updating $\PPH(T[i..])$ for $i = 4, 3, 2, 1$
    with the same p-string $T = \mathtt{axyxyyxxyyxxzyazy}$
    as in Figures~\ref{fig:p-position_heap} and \ref{fig:suffix_link}.
    First, we update $\PPH(T[4..])$ (upper left) to $\PPH(T[3..])$ (upper right).
    Since $T[3] = T[5] = \mathtt{y}$ and $d = 5-3 = 2$,
    we first try to find the lowest ancestor of the node with id $4$
    that has a reversed suffix link labeled with $d = 2$ by climbing up the path.
    However, it does not exist,
    and then we arrive at the lowest ancestor with id $17$ whose depth is $1$~($< 2$).
    Hence the second sub-case of Case 3 is applied,
    and using its reversed suffix link we move to the node with id $16$.
    The new node with id $3$ is inserted as its child.
    Next, we update $\PPH(T[3..])$ (upper right) to $\PPH(T[2..])$ (lower left).
    Since $T[2] = T[4] = \mathtt{x}$ and $d = 4-2 = 2$,
    we first try to find the lowest ancestor of the node with id $3$
    that has a reversed suffix link labeled with $d = 2$ by climbing up the path,
    and we arrive at the node with id $16$.
    Hence the first sub-case of Case 3 is applied,
    and using its reversed suffix link we move to the node with id $3$.
    The new node with id $2$ is inserted as its child.
    Finally, we update $\PPH(T[2..])$ (lower left) to $\PPH(T[1..])$ (lower right).
    Since $T[1] = \mathtt{a} \in \Sigma$, Case 1 is applied.
    Thus we try to find the lowest ancestor of the node with id $2$
    that has a reversed suffix link labeled with $\mathtt{a}$ by climbing up the path,
    and we arrive at the root.
    Using its reversed suffix link, we move to the node with id $15$.
    The new node with id $1$ is inserted as its child.
  }
  \label{fig:snapshot_const}
\end{figure}

\begin{lemma}
  The above algorithm correctly updates
  $\PPH(T[i..])$ to $\PPH(T[i-1..])$.
\end{lemma}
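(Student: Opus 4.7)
The plan is to show that the node $u_i$ located by the algorithm equals the deepest node of $\PPH(T[i..])$ whose path label is a prefix of $\prev(T[i-1..])$, and that attaching a new child $v_{i-1}$ to $u_i$ with edge label $\prev(T[i-1..])[|u_i|+1]$ reproduces the sequence hash tree prescribed by Definition~\ref{def:seq_hash} when $\prev(T[i-1..])$ is appended to $\mathcal{P}_{T[i..]}$. By Lemma~\ref{lem:position_nodes_correspondence} only one node is added in passing from $\PPH(T[i..])$ to $\PPH(T[i-1..])$, so this suffices.

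First I would record the invariant that in $\PPH(T[i..])$ the node $v_i$ with $\id(v_i)=i$ is the deepest node representing a prefix of $\prev(T[i..])$: $v_i$ was inserted as the shortest missing prefix of $\prev(T[i..])$ when $\PPH(T[i..])$ was built from $\PPH(T[i+1..])$, and it sits as a leaf on that branch. Next, I would show that every node $w$ of $\PPH(T[i..])$ whose path label is a prefix of $\prev(T[i-1..])$ of length $\ell\ge 1$ arises as $\rslink(a,v')$ for some ancestor $v'$ of $v_i$ at depth $\ell-1$. Indeed, $w=\prev(T[a..a+\ell-1])$ for some $a\ge i$; by Lemma~\ref{lem:substring_containment}, $\prev(T[a+1..a+\ell-1])$ is also represented in $\PPH(T[i..])$; this prev-encoding equals $\prev(T[i..i+\ell-2])$ because $T[a..a+\ell-1]$ and $T[i-1..i+\ell-2]$ p-match; so it is a prefix of $\prev(T[i..])$ of length $\ell-1$, hence by the invariant it is an ancestor $v'$ of $v_i$; and finally Definition~\ref{def:reversed_suffix_link} exhibits $w$ as $\rslink(a,v')$ for the appropriate label. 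Conversely, any ancestor of $v_i$ supplying such a $\rslink$ delivers a prefix of $\prev(T[i-1..])$ in the heap. The longest represented prefix therefore corresponds to the \emph{deepest} ancestor of $v_i$ possessing the required reversed suffix link, which is exactly what the climb in the algorithm locates.

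I would then verify case by case that the label used by the algorithm matches the one demanded by Definition~\ref{def:reversed_suffix_link}. Case~1 ($T[i-1]\in\Sigma$) and Case~2 ($T[i-1]\in\Pi$ and fresh in $T[i..]$) use the first clause of the definition with labels $T[i-1]$ and $0$ respectively. In Case~3, where $T[i-1]$ first reappears at distance $d$ in $T[i-1..]$, the required label depends on whether the ancestor's depth reaches $d$: if so, the first-repeat position lies within the prefix under consideration and the second clause of Definition~\ref{def:reversed_suffix_link} gives label $d$; otherwise the prepended character behaves as though fresh relative to this prefix, and label $0$ is correct. Proposition~\ref{prop:suffix_link_monotonicity} legitimises the climb in each case, in particular guaranteeing the graceful switch from $\rslink(d,\cdot)$ at depths $\ge d$ to $\rslink(0,\cdot)$ at shallower depths when no ancestor supports $\rslink(d,\cdot)$.

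The main obstacle will be Case~3, where I must justify the split between its two sub-cases and show that the fall-back to $\rslink(0,\cdot)$ after an unsuccessful search for $\rslink(d,\cdot)$ indeed yields the correct $u_i$. I would handle this by writing out explicitly that $\prev(T[i-1..i+k-1])$ equals $0$ followed by $\prev(T[i..i+k-1])$ with the $0$ at position $d+1$ rewritten to $d$ precisely when $k\ge d$; combined with Proposition~\ref{prop:suffix_link_monotonicity} this pinpoints $u_i$ and completes the verification.
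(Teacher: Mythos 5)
Your proposal is correct and follows essentially the same route as the paper's proof: identify $u_i$ as the longest prefix of $\prev(T[i-1..])$ represented in $\PPH(T[i..])$, use Lemma~\ref{lem:substring_containment} to relate represented prefixes of $\prev(T[i-1..])$ to ancestors of $v_i$ carrying the appropriately labeled reversed suffix link, and resolve the label-$d$ versus label-$0$ split in Case~3 via Proposition~\ref{prop:suffix_link_monotonicity}. Your write-up makes the two-directional correspondence between such prefixes and ancestors of $v_i$ more explicit than the paper does, but the underlying argument is the same.
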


\begin{proof}
  Note that $v_i$ and $v'_i$ are prefixes of $\prev(T[i..])$.
  Let $a$ be the character in $\Sigma \cup [0..n-1]$
  that is used in the reversed suffix link as above.

  In Cases 1 and 2 above,
  we have $a = T[i-1] \in \Sigma$ or $a = 0$. 
  Then it is clear that $av'_i$ is a prefix of $\prev(T[i-1..])$.
  Since $v'_i$ is the lowest ancestor of $v_i$
  for which $\rslink(a, v'_i)$ is defined,
  $u_i = a v'_i$ is the longest prefix of $\prev(T[i-1..])$
  that is represented by $\PPH(T[i..])$.
  Hence, the new node $v_{i-1}$ and its incoming edge labeled by
  $\prev(T[i-1..])[|u_i|+1]$ are correctly inserted. 

  Consider Case 3 above.
  We first try to find $v'_i$ in the first sub-case, where $a = d \geq 1$.
  If it exists,
  then $v'_i$ is the lowest ancestor of $v_i$ such that $\rslink(d, v'_i)$ is defined,  
  and thus $\rslink(d, v'_i) = 0v'_i[1..d-1]dv'_i[d+1..|v'_i|]$.
  It now follows from Lemma~\ref{lem:substring_containment}
  that $u_i = 0v'_i[1..d-1]dv'_i[d+1..|v'_i|]$ is the longest prefix of
  $\prev(T[i-1..])$ that is represented by $\PPH(T[i..])$.
  Hence, the new node $v_{i-1}$ and its incoming edge labeled by
  $\prev(T[i-1..])[|u_i|+1]$ are correctly inserted in this sub-case.
  It is clear that $v'_i$ in the first sub-case is at least of depth $d$.
  Hence, if we arrive at the ancestor of $v_i$
  of depth $d-1$ without encountering the lowest ancestor
  satisfying the condition of the first sub-case,
  then we try to find the lowest ancestor of $v_i$
  that has a reversed suffix link labeled by $0$ (second sub-case).
  Thus, by a similar argument to Case 2,
  the new node $v_{i-1}$ its incoming edge labeled by
  $\prev(T[i-1..])[|u_i|+1]$ are correctly inserted in this second sub-case.
\end{proof}

\subsection{Adding a new reversed suffix link}

After inserting the new node $v_{i-1}$,
we need to maintain the reversed suffix links corresponding to $v_{i-1}$.

\begin{lemma} \label{lem:suffix_link_1}
  There is exactly one reversed suffix link that points
  to the new node $v_{i-1}$ in $\PPH(T[i-1..])$.
  Moreover, this reversed suffix link comes from
  the ancestor of $v_{i}$ of depth $|v'_{i}|+1$.
\end{lemma}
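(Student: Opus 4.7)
The plan is to apply Lemma~\ref{lem:reversed_suffix_link_well_defined} contrapositively to pin down the unique candidate source of an incoming reversed suffix link at $v_{i-1}$, and then to verify existence by matching the three cases of the insertion algorithm. First I would observe that by construction $|v_{i-1}| = |u_i|+1 = |v'_i|+2$, so any node $w$ with $\rslink(b, w) = v_{i-1}$ must satisfy $|w| = |v'_i|+1$. Taking the witness string $X = T[i-1..i+|v'_i|]$, which has length $|v_{i-1}|$ and satisfies $\prev(X) = v_{i-1}$ by the very definition of $v_{i-1}$, Lemma~\ref{lem:reversed_suffix_link_well_defined} then forces $w = \prev(X[2..|X|]) = \prev(T[i..i+|v'_i|])$. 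Because $v_i$ is a prefix of $\prev(T[i..n])$ and the previous encoding commutes with taking prefixes, this $w$ is precisely the prefix of $v_i$ of length $|v'_i|+1$, i.e., the ancestor of $v_i$ at depth $|v'_i|+1$; this pins down the source uniquely.

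Next I would argue that the label $b$ is uniquely determined by $w$ and $v_{i-1}$: the first clause of Definition~\ref{def:reversed_suffix_link} forces $b = v_{i-1}[1] \in \Sigma \cup \{0\}$ and requires $v_{i-1}[2..|v_{i-1}|] = w$, while the second clause forces $b \in [1..n-1]$ to be the single index at which $v_{i-1}$ differs from $0\,w$ (with the difference being $v_{i-1}[b+1] = b$ versus $w[b] = 0$); a brief inspection rules out both clauses applying at the same time. It then remains to verify existence of $\rslink(b, w) = v_{i-1}$ by walking through the three cases of the construction. In Case~1, $v_{i-1} = T[i-1]\,w$ so the first clause yields $\rslink(T[i-1], w) = v_{i-1}$; in Case~2, the absence of $T[i-1]$ in $T[i..i+|v'_i|]$ gives $v_{i-1} = 0\,w$ and hence $\rslink(0, w) = v_{i-1}$. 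In the first sub-case of Case~3, computing $\prev(T[i-1..i+|v'_i|])$ shows that $w$ has a $0$ at position $d+1$ (because $T[i+d]$ is the first occurrence of $T[i-1]$ in $T[i..]$) and that $v_{i-1}$ coincides with $0\,w$ except at the position corresponding to the first recurrence of $T[i-1]$, where it carries the distance value; this exactly fits the second clause of $\rslink$. In the second sub-case of Case~3, the recurrence is too far to the right to fall inside $v_{i-1}$, so $v_{i-1} = 0\,w$ again, reducing to the argument of Case~2. Throughout, Lemma~\ref{lem:substring_containment} (or the identification of $w$ as a prefix of $v_i$) guarantees that $w$ is genuinely a node of $\PPH(T[i-1..])$.

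The main obstacle is the bookkeeping between the algorithm's parameter $d = j - i$ and the offset at which the newly encoded distance surfaces inside $v_{i-1}$, shifted by one due to the prepended character; one must align the $\rslink$ formula $0\,v[1..a-1]\,a\,v[a+1..|v|]$ carefully with the previous encoding of $T[i-1..i+|v'_i|]$. One also relies implicitly on the fact that $v'_i$ is a proper ancestor of $v_i$, so that the prefix of $v_i$ of length $|v'_i|+1$ really exists on the root-to-$v_i$ path; otherwise the phrase ``ancestor of $v_i$ of depth $|v'_i|+1$'' would have no referent.
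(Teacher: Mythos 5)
Your proposal is correct and reaches the conclusion by broadly the same route as the paper: both identify the source as $z_i$, the ancestor of $v_i$ of depth $|v'_i|+1$, and then check the two clauses of Definition~\ref{def:reversed_suffix_link} against the path labels. The interesting difference is in how uniqueness is established. The paper argues label-first: it asserts that the label of any reversed suffix link into $v_{i-1}$ is determined by the path label of $v_{i-1}$, so two sources would carry the same label and hence coincide. You argue source-first: you apply Lemma~\ref{lem:reversed_suffix_link_well_defined} to the witness $X = T[i-1..i+|v'_i|]$ to force any source to equal $\prev(X[2..|X|]) = \prev(T[i..i+|v'_i|]) = z_i$, and only then pin down the label by showing the two clauses cannot both apply from a fixed source. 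Your version is somewhat more robust, since the paper's premise is not immediate from the purely syntactic definition of $\rslink$: when $v_{i-1}[1]=0$ and $v_{i-1}[a+1]=a$ for some $a$, both clauses nominate candidate labels (from \emph{different} candidate sources), and it takes exactly your semantic argument to exclude a second incoming link. For existence, the paper splits on whether the new edge label $\prev(T[i-1..])[|u_i|+1]$ equals the label of the edge $(v'_i,z_i)$, while you split on the three insertion cases; the content is the same, and you are right that the delicate point is the $d$ bookkeeping — the incoming link's label is the distance from the first recurrence of $T[i-1]$ back to position $i-1$ itself, which is $j-i+1$ rather than the algorithm's stated $j-i$, an off-by-one the worked example in Figure~\ref{fig:snapshot_const} silently corrects. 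Your final caveat that $|v'_i|+1 \le |v_i|$ must hold is discharged by Lemma~\ref{lem:suffix_link_2} applied inductively: $v_i$ has no outgoing reversed suffix link, so $v'_i$ is a proper ancestor of $v_i$.
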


\begin{proof}
  Suppose on the contrary that
  there are two distinct nodes $x$ and $y$
  each of which has a reversed suffix link pointing to $v_{i-1}$.
  The label of any reversed suffix link that points to $v_{i-1}$
  is uniquely determined by the path label from the root to $v_{i-1}$.
  Therefore, 
  the reversed suffix links of $x$ and $y$ that point to $v_{i-1}$
  are both labeled by the same symbol.
  This means that $x = y$,
  however, this contradicts the definition of the p-position heap.
  Hence, there is at most one node which has a reversed suffix link
  that points to $v_{i-1}$.

  Let $z_i$ be the ancestor of $v_{i}$ of depth $|v'_{i}|+1$.
  Also, let $x = (T[i..])[|u_i|] = (T[i-1..])[|u_i|+1] = T[i+|u_i|-1]$,
  namely, $x$ is the text character that corresponds to
  the label of the edge $(v'_i, z_i)$ that is on the path from the root
  to
  \hbnote*{please check}{
  $v_{i}$,
  }
  and to the label of the new edge $(u_i, v_{i-1})$.
  If $x \in \Pi$ and $i+|u_i|-1$ is the smallest position in $T[i-1..]$
  such that $T[i-1] = T[i+|u_i|-1]$,
  then $(v'_i, z_i)$ is labeled with $0$
  while $(u_i, v_{i-1})$ is labeled with $|u_i|$.
  Otherwise,  the label of the new edge $(u_i, v_{i-1})$ must be equal to
  that of $(v'_i, z_i)$.
  It follows from the definition of reversed suffix links
  that in both cases the reversed suffix link to $v_{i-1}$ comes from $z_i$.  
\end{proof}

\begin{lemma} \label{lem:suffix_link_2}
  There is no reversed suffix link
  that comes from the new node $v_{i-1}$ in $\PPH(T[i-1..])$.
\end{lemma}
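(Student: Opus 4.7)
The plan is to argue by contradiction. Suppose, for the sake of contradiction, that there exists a reversed suffix link $\rslink(a, v_{i-1}) = u$ in $\PPH(T[i-1..])$ for some character $a \in \Sigma \cup [0..n-1]$. Since a reversed suffix link strictly increases depth by one, $u$ sits at depth $|v_{i-1}|+1$, and in particular $u \neq v_{i-1}$. By Lemma~\ref{lem:position_nodes_correspondence} applied to the step from $\PPH(T[i..])$ to $\PPH(T[i-1..])$, the only node of $\PPH(T[i-1..])$ that is not already a node of $\PPH(T[i..])$ is $v_{i-1}$ itself. Consequently, $u$ must have been a node of $\PPH(T[i..])$ as well.

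Next, I would exploit the bijection $\id$ to realize $u$ as a concrete text occurrence. Setting $j = \id(u) \in [i, n]$, the path label of $u$ is a prefix of $\prev(T[j..])$ of length $|u|$, that is, $u = \prev(T[j..j+|u|-1])$. Applying Lemma~\ref{lem:reversed_suffix_link_well_defined} with $X = T[j..j+|u|-1]$ (so that $\prev(X) = u$) then yields $\prev(T[j+1..j+|u|-1]) = v_{i-1}$ as a path label.

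Finally, I would invoke Lemma~\ref{lem:substring_containment}, whose proof is purely structural and applies verbatim to $\PPH(T[i..])$. Since $\prev(T[j..j+|u|-1]) = u$ is represented by $\PPH(T[i..])$ and $T[j+1..j+|u|-1]$ is a substring of $T[j..j+|u|-1]$, the string $v_{i-1} = \prev(T[j+1..j+|u|-1])$ is also represented by $\PPH(T[i..])$. This contradicts the defining property of the newly inserted node $v_{i-1}$, whose path label is by construction the shortest prefix of $\prev(T[i-1..])$ that is not represented by $\PPH(T[i..])$. Hence no such $a$ and $u$ exist.

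The only delicate point is the identification of $u$ with an explicit substring of $T[i..]$ through $\id$, so that Lemma~\ref{lem:reversed_suffix_link_well_defined} can be fed a concrete string $X$; once this is in place, the substring-containment lemma closes the argument with no further case analysis on whether $a$ lies in $\Sigma \cup \{0\}$ or in $[1..n-1]$.
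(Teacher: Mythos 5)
Your proof is correct and follows essentially the same route as the paper's: identify the target of the hypothetical reversed suffix link as a pre-existing node with $\id$-value $j \geq i$, use Lemma~\ref{lem:reversed_suffix_link_well_defined} to see that dropping its first character yields $v_{i-1}$, and then invoke Lemma~\ref{lem:substring_containment} to conclude $v_{i-1}$ was already represented in $\PPH(T[i..])$, a contradiction. The only cosmetic difference is that you apply the substring-containment lemma directly with text $T[i..]$ whereas the paper applies it with text $T[j..]$ and then uses monotonicity of the incremental construction; both are valid.
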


\begin{proof}
  Suppose on the contrary that there is a reversed suffix link from $v_{i-1}$
  in $\PPH(T[i-1..])$,
  and let $w$ be the node that is pointed by this reversed suffix link.
  Notice that $|w| = |v_{i-1}|+1$.
  Let $T[j..]$ be the suffix of $T$ for which this node $w$ was inserted,
  namely, $\id(w) = j > i-1$.
  By Lemma~\ref{lem:substring_containment},
  for any substring $X$ of $T[j..j+|w|-1]$,
  $\prev(X)$ is represented by $\PPH(T[j..])$,
  and hence it is also represented by $\PPH(T[i..])$ since $j \leq i$.
  Recall that $\prev(T[j+1..j+|w|-1]) = \prev(T[i-1..i+|v_{i-1}|])$,
  which implies that the node $v_{i-1}$ existed already in $\PPH(T[i..])$.
  However, this contradicts that $v_{i-1}$ is the node
  that was inserted when $\PPH(T[i..])$ was updated to $\PPH(T[i-1..])$.  
\end{proof}

Due to Lemmas~\ref{lem:suffix_link_1} and \ref{lem:suffix_link_2},
there is only one reversed suffix link that is newly inserted in $\PPH(T[i-1..])$.

\subsection{Complexity analysis}

\begin{lemma} \label{lem:construction_complexity}
  The proposed algorithm runs in a total of $O(n \log (\sigma + \pi))$ time
  with $O(n)$ space.
\end{lemma}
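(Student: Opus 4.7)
The plan is to argue the time bound by an amortized analysis of the climbing procedure and the space bound by a direct accounting of the stored structures.

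For the time bound, I would set $d_i = |v_i|$, the depth of the node inserted at step $i$. By the description of the update, when going from $\PPH(T[i..])$ to $\PPH(T[i-1..])$ the algorithm climbs $d_i - |v'_i|$ parent pointers from $v_i$ to $v'_i$, then follows one reversed suffix link to $u_i$ of depth $|v'_i|+1$, and attaches $v_{i-1}$ as a new child, so $d_{i-1} = |v'_i| + 2$. The number of parent traversals is therefore $d_i - |v'_i| = d_i - d_{i-1} + 2$. Summing this telescopically from $i = n$ down to $i = 2$ yields at most $d_n - d_1 + 2(n-1) = O(n)$ total climbing steps over the whole construction.

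Next I would specify the data structures used to realize each elementary operation in $O(\log(\sigma + \pi))$ time: at every node, a balanced binary search tree keyed by labels in $\Sigma \cup [0..n-1]$ stores the outgoing tree edges, and a second balanced BST stores the outgoing reversed suffix links. Each step during the climb consists of a constant-time parent pointer traversal and a single BST lookup for the relevant label---an s-character in Case 1, the symbol $0$ in Case 2, or the integer $d$ (switching to $0$ once the climb passes depth $d-1$) in Case 3---costing $O(\log(\sigma + \pi))$. After the climb we perform one BST insertion for the new edge into $u_i$'s edge-BST and, by Lemma~\ref{lem:suffix_link_1} and Lemma~\ref{lem:suffix_link_2}, exactly one BST insertion for the single new reversed suffix link at the node $z_i$, which was visited during the climb and can be cached there. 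Combining $O(n)$ climbing steps with $O(n)$ insertions, each costing $O(\log(\sigma + \pi))$, gives $O(n \log (\sigma + \pi))$ overall. Maintaining $\prev(T[i-1..])$ incrementally from $\prev(T[i..])$ contributes an additional $O(n \log \pi)$ time using a BST over the distinct p-characters seen so far, which is absorbed.

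For space, Lemma~\ref{lem:position_nodes_correspondence} gives exactly $n+1$ nodes and hence $n$ tree edges; Lemmas~\ref{lem:suffix_link_1} and \ref{lem:suffix_link_2} together imply that exactly one reversed suffix link is added per update, so the final structure carries $O(n)$ reversed suffix links, and the total sizes of the edge-BSTs and rsl-BSTs across all nodes sum to $O(n)$. The auxiliary structure for maintaining $\prev$ uses $O(\min\{\pi, n\}) = O(n)$ space. The main subtlety I anticipate is confirming that the telescoping bound remains valid in Case 3, where the climb may switch from searching for label $d$ to searching for label $0$ once it passes depth $d-1$: since the number of parent pointers traversed is still exactly $d_i - |v'_i|$ regardless of which sub-case fires, and each individual step is charged a single $O(\log(\sigma + \pi))$ BST lookup, the amortized bound goes through unchanged.
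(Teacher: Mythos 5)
Your proposal is correct and follows essentially the same route as the paper: a telescoping bound on the number of nodes visited during the climb (the paper writes it as $\sum_{i=n}^{2}(|v_i|-|v_{i-1}|+4)=O(n)$, matching your $d_i-d_{i-1}+2$ accounting), combined with an $O(\log(\sigma+\pi))$ dictionary lookup per visited node and a per-step count of one new edge and one new reversed suffix link via Lemmas~\ref{lem:suffix_link_1} and~\ref{lem:suffix_link_2}. The only detail the paper adds that you omit is that the reversed suffix links out of the auxiliary node $\bot$ are created lazily (one per first occurrence of an s-character) so that their number stays $O(n)$ rather than $O(\sigma)$, but this does not affect the correctness of your argument.
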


\begin{proof}
  For each $i = n, \ldots, 1$,
  the algorithm updates $\PPH(T[i..])$ to $\PPH(T[i-1..])$.
  The update begins with node $v_{i}$ such that $\id(v_{i}) = i$,
  and climbs up the path to $v'_{i}$.
  It takes a reversed suffix link from $v'_{i}$ and moves
  to $u_{i}$ of depth $|v'_{i}|+1$,
  and the new node $v_{i-1}$ of depth $|v'_{i}|+2$
  with $\id(v_{i-1}) = i-1$ is inserted.
  Hence the total number of nodes visited 
  when updating $\PPH(T[i..])$ to $\PPH(T[i-1..])$
  is $|v_i|-|v'_{i}|+2 = |v_i|-|v_{i-1}|+4$.
  Thus, the total number of nodes visited for all $i = n, \ldots, 1$
  sums up to $\sum_{i=n}^2 (|v_i|-|v_{i-1}| + 4) = |v_n|-|v_1| + 4(n-1) = O(n)$.
  At each node that we visit, it takes $O(\log (\sigma + \pi))$ time to search
  for the corresponding reversed suffix link,
  as well as inserting a new edge.
  Hence, the total time cost is $O(n \log (\sigma + \pi))$.

  It is clear that the number of nodes in $\PPH(T)$ is $n+2$,
  including the root and the auxiliary node $\bot$.
  It follows from Lemmas~\ref{lem:suffix_link_1} and \ref{lem:suffix_link_2}
  that the number of reversed suffix links coming out
  from the root, the internal nodes, and the leaves is $n+1$.
  As for the reversed suffix links that come from $\bot$ to the root,
  we add a new reversed suffix link labeled with $T[i-1]$
  only if $T[i-1] \in \Sigma$ and $T[i-1] \neq T[j]$ for any $j < i-1$.
  This way, we can maintain these reversed suffix links from
  $\bot$ in an online manner, using $O(n)$ space.
\end{proof}

We have proven the following theorem,
which is the main result of this paper.

\begin{theorem}
  For an input p-string $T$ of length $n$,
  the proposed algorithm constructs $\PPH(T[i..])$
  in a right-to-left online manner for $i = n, \ldots, 1$,
  in a total of $O(n \log (\sigma + \pi))$ time with $O(n)$ space.
\end{theorem}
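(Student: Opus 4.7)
The plan is to assemble the theorem from the lemmas already proven in Sections 4.1--4.4, which together cover correctness and complexity. I would proceed by induction on decreasing $i$ from $n$ down to $1$, maintaining the joint invariant that after processing position $i$, the algorithm has available both $\PPH(T[i..])$ itself and all of its reversed suffix links $\rslink$, together with the correct identity $\id(v_i) = i$ for the most recently inserted node and the auxiliary $\bot$-to-root links labelled by the s-characters seen so far in $T[i..]$.

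For the base case I would start from $\PPH(\varepsilon)$ consisting of the root $r$ together with $\bot$, with $\rslink(a,\bot) = r$ pre-installed for every $a \in \Sigma \cup \{0\}$; this is consistent with Definition~\ref{def:reversed_suffix_link} applied to the empty text suffix. For the inductive step, I would first invoke the lemma that ``the above algorithm correctly updates $\PPH(T[i..])$ to $\PPH(T[i-1..])$'' to get the new node $v_{i-1}$ and its incoming edge, and then invoke Lemmas~\ref{lem:suffix_link_1} and~\ref{lem:suffix_link_2} to justify that exactly one reversed suffix link needs to be added, coming from the ancestor of $v_i$ at depth $|v'_i|+1$, and none from $v_{i-1}$ itself. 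Finally, when $T[i-1] \in \Sigma$ is encountered for the first time, I would note that the $\bot$-to-root link labelled $T[i-1]$ is created on demand, which is consistent with the definition because before this step $T[i-1]$ does not appear in $T[i..]$ at all.

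For the resource bounds I would cite Lemma~\ref{lem:construction_complexity} directly: the amortized climbing-up argument $\sum_{i=n}^{2}(|v_i|-|v_{i-1}|+4) = O(n)$ bounds the total node traversals, each traversal costing $O(\log(\sigma+\pi))$ for a balanced-search-tree lookup of the appropriate reversed suffix link or edge label, while the total number of reversed suffix links and edges is $O(n)$; the auxiliary bookkeeping of the ``previous occurrence'' information needed to determine the label $d$ in Case~3 is the $O(n \log \pi)$, $O(\min\{\pi,n\})$-space balanced search tree mentioned after Definition~\ref{def:p_position_heap}, which is absorbed by the stated bounds.

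The main subtlety I would flag is not any single step but the \emph{coupling} between correctness of node insertion and availability of reversed suffix links: Case~3 of the update relies on being able to locate the lowest ancestor carrying an $\rslink$ labelled $d$ (or failing that, $0$), so the inductive invariant must assert not merely that $\PPH(T[i..])$ is structurally correct but that every $\rslink$ required by Definition~\ref{def:reversed_suffix_link} is materially present in the data structure. Proposition~\ref{prop:suffix_link_monotonicity} is the enabling fact here, because it guarantees that once an $\rslink$ labelled $a$ exists at some node $v$, compatible links exist at every ancestor of $v$, so the ancestor search in the update procedure is guaranteed to terminate successfully at the correct locus whenever Definition~\ref{def:reversed_suffix_link} predicts that $u_i$ exists.
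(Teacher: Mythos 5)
Your proposal is correct and takes essentially the same route as the paper: the theorem is obtained by assembling the correctness lemma for each update step, Lemmas~\ref{lem:suffix_link_1} and~\ref{lem:suffix_link_2} for maintaining the reversed suffix links, and Lemma~\ref{lem:construction_complexity} for the time and space bounds, with Proposition~\ref{prop:suffix_link_monotonicity} guaranteeing that the ancestor search succeeds. Your explicit statement of the inductive invariant (including the on-demand creation of the $\bot$-to-root links, which keeps the space at $O(n)$) is a faithful, slightly more careful rendering of what the paper leaves implicit.
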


\section{Parameterized pattern matching with augmented $\PPH(T)$}

Ehrenfeucht et al.~\cite{ehrenfeucht_position_heaps_2011}
introduced \emph{maximal reach pointers},
which used for efficient
pattern matching queries on position heaps.
Diptarama et al.~\cite{DiptaramaKONS17}
introduced maximal reach pointers for their LR p-position heaps,
and showed how to perform
pattern matching queries in $O(m \log (\sigma + \pi) + m\pi + \pocc)$ time,
where $m$ is the length of a given pattern p-string
and $\pocc$ is the number of occurrences to report.
We can naturally extend the notion of maximal reach pointers
to our RL p-position heaps, as follows:

\begin{definition}[Maximal reach pointers]
  For each position $1 \leq i \leq n$ in $T$,
  the \emph{maximal reach pointer} of the node $v$ with $\id(v) = i$
  points to the deepest node
  $u$ of $\PPH(T)$ such that $u$ is a prefix of $\prev(T[i..])$.
\end{definition}

We denote by $\mrp(i)$
the pointer of node $v$ such that $\id(v) = i$.
The \emph{augmented} $\PPH(T)$ is $\PPH(T)$
with the maximal reach pointers of all nodes.
For simplicity, if $\mrp(i)$ points to the node with id $i$,
then we omit this pointer.
See Figure~\ref{fig:maximal_reach_ponters}
for an example of maximal reach pointers and augmented $\PPH(T)$.

\begin{figure}[tb]
   \centerline{
    \begin{tabular}{|l|r|} \hline
     $\prev(T[12..])$ & $\uwave{\mathtt{a}}$ \\ \hline
     $\prev(T[11..])$ & $\uwave{0\mathtt{a}}$ \\ \hline
     $\prev(T[10..])$ & $\uwave{00\mathtt{a}}$ \\ \hline
     $\prev(T[9..])$ & $\uwave{\mathtt{a}00\mathtt{a}}$ \\ \hline
     $\prev(T[8..])$ & $\uwave{0\mathtt{a}03}\mathtt{a}$ \\ \hline
     $\prev(T[7..])$ & $\uwave{00\mathtt{a}3}3\mathtt{a}$ \\ \hline
     $\prev(T[6..])$ & $\uwave{\mathtt{a}00\mathtt{a}}33\mathtt{a}$ \\ \hline
     $\prev(T[5..])$ & $\uwave{0\mathtt{a}03}\mathtt{a}33\mathtt{a}$ \\ \hline
     $\prev(T[4..])$ & $\uwave{00\mathtt{a}3}3a33\mathtt{a}$ \\ \hline
     $\prev(T[3..])$ & $\uwave{\mathtt{a}00\mathtt{a}}33\mathtt{a}33\mathtt{a}$ \\ \hline
     $\prev(T[2..])$ & $\uwave{0\mathtt{a}03}\mathtt{a}33\mathtt{a}33\mathtt{a}$ \\ \hline
     $\prev(T[1..])$ & $\uwave{01}\mathtt{a}03\mathtt{a}33\mathtt{a}33\mathtt{a}$ \\ \hline
    \end{tabular}
    \hfil
    \raisebox{-35mm}{\includegraphics[scale=0.65]{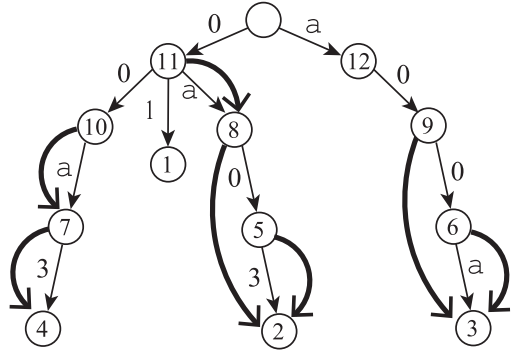}}
  }
   \caption{To the left is the list of $\prev(T[i..])$
     for p-string $T = \mathtt{xxayxayxayxa}$ of length $12$,
     where $\Sigma = \{\mathtt{a}\}$ and $\Pi = \{\mathtt{x}, \mathtt{y}\}$.
     To the right is an illustration for augmented $\PPH(T)$,
     where the maximal reach pointers are indicated by the bold arrows.
     The wavy underlined prefix of each $\prev(T[i..])$ in the left list
     denotes the longest prefix of $\prev(T[i..])$
     that is represented by $\PPH(T)$,
     and hence it is the destination of $\mrp(i)$.
}
  \label{fig:maximal_reach_ponters}
\end{figure}

\begin{lemma}
  For every $1 \leq i \leq n$,
  we can compute $\mrp(i)$ in a total of $O(n \log (\sigma + \pi))$ time
  with $O(n)$ space.
\end{lemma}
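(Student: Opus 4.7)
I plan to compute all maximal reach pointers by scanning the positions of $T$ from left to right, amortizing work across consecutive positions via the \emph{forward} suffix links obtained by inverting the reversed suffix links of Definition~\ref{def:reversed_suffix_link}.

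As preprocessing, I would first compute, for every non-root node $u$ of $\PPH(T)$, a pointer $\slink(u)$ defined as the (unique by Lemma~\ref{lem:suffix_link_1}) node $v$ such that some $\rslink(\cdot, v) = u$. A single pass over all outgoing reversed suffix links kept by the construction algorithm sets these pointers in $O(n)$ time. By Lemma~\ref{lem:reversed_suffix_link_well_defined}, if $u$ represents $\prev(X)$, then $\slink(u)$ represents $\prev(X[2..|X|])$. I would also precompute, in $O(n \log \pi)$ time, a table giving for each position $j$ the largest $p < j$ with $T[p]=T[j]$ (if any), which lets me evaluate any single symbol $\prev(T[i..])[k]$ in constant time.

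The main computation iterates $i = 1, \ldots, n$, maintaining a current node $w$ at depth $d$. For $i = 1$, I set $w$ to the root and $d=0$; for $i > 1$, I set $w \leftarrow \slink(\mrp(i-1))$ and $d \leftarrow |\mrp(i-1)| - 1$. I then greedily descend: while $w$ has an outgoing edge labeled $\prev(T[i..])[d+1]$, I follow that edge and increment $d$. Finally I record $\mrp(i) \leftarrow w$. Correctness reduces to showing that this $w$ lies on the root-to-$\mrp(i)$ path: by Lemma~\ref{lem:reversed_suffix_link_well_defined} the node $\slink(\mrp(i-1))$ represents the length-$(|\mrp(i-1)|-1)$ prefix $\prev(T[i..i+|\mrp(i-1)|-2])$ of $\prev(T[i..])$, and by Lemma~\ref{lem:substring_containment} every prefix of $\prev(T[i..])$ up to $\mrp(i)$ is represented; since a trie has unique root-paths, each such represented prefix of length at least $|\mrp(i-1)|-1$ must pass through $w$, so greedy extension indeed reaches $\mrp(i)$.

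For the complexity, write $L_i = |\mrp(i)|$. Iteration $i$ performs at most $L_i - L_{i-1} + O(1)$ downward edge traversals (starting at depth $L_{i-1}-1$, ending at depth $L_i$) together with $O(1)$ suffix-link and symbol-lookup operations, so summed over all $i$ the total number of descents telescopes to $L_n + O(n) = O(n)$. Each descent spends $O(\log(\sigma+\pi))$ time to locate the correct child, yielding a grand total of $O(n\log(\sigma+\pi))$ time; the auxiliary pointers and table fit in $O(n)$ space. The principal conceptual step is justifying the suffix-link reuse via Lemma~\ref{lem:reversed_suffix_link_well_defined}; the rest is a routine amortization.
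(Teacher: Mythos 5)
Your proposal is correct and follows essentially the same route as the paper: compute $\mrp(i)$ for $i = 1, \ldots, n$ in increasing order, jump via the forward (inverted reversed) suffix link from the previous maximal-reach node --- justified by Lemma~\ref{lem:reversed_suffix_link_well_defined} --- and greedily descend, with the telescoping sum giving $O(n)$ traversed edges and $O(\log(\sigma+\pi))$ per edge. The only cosmetic differences are that the paper seeds the iteration at the leaf with $\id = 1$ rather than descending from the root, and that you make explicit the previous-occurrence table needed to evaluate $\prev(T[i..])[k]$ on the fly.
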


\begin{proof}
  We compute $\mrp(i)$ for each position $i = 1, \ldots, n$ increasing order.
  In so doing, we use the \emph{forward} suffix link
  that are the reversals of the reversed suffix links.
  For simplicity, we will call forward suffix links as suffix links.
  Since there is exactly one in-coming reversed suffix link
  to each node, there is also exactly one out-going suffix link
  from each node.
  Let $\slink(v)$ denote the node that the suffix link of $v$ points to.

  We begin with node $v_1$ such that $\id(v_1) = 1$.
  Since we have built $\PPH(T[i..])$ in decreasing order of $i$,
  $v_1$ is a leaf of $\PPH(T)$ and
  it is the deepest node that is a prefix of $\prev(T[1..])$.
  Now we take the suffix link of $v_1$,
  and let $u_1 = \slink(v_1)$.
  Since $\prev(T[1..|v_1|]) = v_1$,
  it follows from Lemma~\ref{lem:reversed_suffix_link_well_defined}
  that $u_1 = \prev(T[2..|v_1|])$,
  which implies that $u_1$ is a prefix of $\prev(T[2..])$.
  Then the deepest node $v_2$ that is a prefix of $\prev(T[2..])$
  can be found by traversing the corresponding path
  from node $u_1$.
  Then, we make a pointer to $v_2$ from the node $w$ with $\id(w) = 2$.
  We iteratively perform the same procedure for all positions $i$ in increasing order.
    
  To analyze the time complexity,
  we can use a similar argument as in Lemma~\ref{lem:construction_complexity}.
  For each $i$,
  the number of nodes traversed is $|v_{i+1}|-|u_i|+1 = |v_{i+1}|-|v_i|+2$.
  Thus, the total number of nodes visited
  sums up to $\sum_{i=1}^{n-1}(|v_{i+1}|-|v_i|+2) = |v_n|-|v_1| + 2(n-1) = O(n)$.
  Since it takes $O(\log (\sigma + \pi))$ time to search for
  each corresponding edge in the traversal,
  the total running time is $O(n \log (\sigma + \pi))$.

  The space requirement is clearly $O(n)$.
\end{proof}

It is straightforward that
by applying Diptarama et al.'s pattern matching algorithm to
our $\PPH(T)$ augmented with maximal reach pointers,
parameterized pattern matching can be done 
in $O(m \log (\sigma + \pi) + m \pi + \pocc)$ time.

\begin{corollary}
  Using our augmented $\PPH(T)$,
  one can perform parameterized pattern matching queries
  in $O(m \log (\sigma + \pi) + m\pi + \pocc)$ time.
\end{corollary}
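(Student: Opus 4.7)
The plan is to invoke Diptarama et al.'s~\cite{DiptaramaKONS17} parameterized pattern matching algorithm essentially unchanged on our augmented $\PPH(T)$. Our RL p-position heap has all the structural features their algorithm requires: it is a trie of prefixes of $\prev$-encoded suffixes (Definition~\ref{def:p_position_heap}), it is substring-closed (Lemma~\ref{lem:substring_containment}), and it carries a bijection between text positions and non-root nodes (Lemma~\ref{lem:position_nodes_correspondence}) together with maximal reach pointers recording how far each $\prev(T[i..])$ extends in the heap. None of these properties distinguishes the RL from the LR version from the query algorithm's viewpoint, so the same procedure and analysis apply.

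Concretely, I would proceed in four steps. First, compute $\prev(P)$ in $O(m \log \pi)$ time using a balanced search tree over the p-characters seen so far in $P$. Second, starting at the root, walk down $\PPH(T)$ following $\prev(P)$ symbol by symbol, stopping either when all $m$ symbols are consumed or when no outgoing edge matches; each step costs $O(\log (\sigma+\pi))$ to select among the children, so this phase totals $O(m \log (\sigma+\pi))$. Third, by Lemma~\ref{lem:substring_containment}, every candidate occurrence position $i$ corresponds either to an ancestor $v$ on the root-to-current path with $\id(v)=i$, or to a node off the path whose maximal reach pointer lands on (a descendant of) the currently visited node. Fourth, verify each candidate by comparing $P$ against $T[i..i+m-1]$ under the induced bijection and report $i$ on success.

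For the complexity, the descent contributes $O(m \log (\sigma+\pi))$ and the reporting contributes $O(\pocc)$, since each surviving candidate is enumerated in constant time via a maximal-reach back-pointer. The additive $O(m\pi)$ term is the cost of bijection-consistency checks: when the algorithm passes through a candidate whose represented prefix ends before depth $m$, the next symbol of $\prev(P)$ may disagree with what the text actually induces, and one must examine up to $\pi$ pending p-character assignments to decide whether extending via a maximal reach pointer is consistent with the pattern's bijection. Amortising this over all $m$ pattern positions, rather than per candidate, yields the $O(m\pi)$ bound.

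The main (and only non-routine) obstacle is arguing that this bijection-consistency work is independent of $\pocc$. This follows from the observation that the verification is driven by the single descent path of $\prev(P)$ in the heap, and at each of the $m$ depths only $O(\pi)$ distinct bijection-dependent alternatives need to be tracked; hence the total verification work is $O(m\pi)$ globally rather than per reported position. Since the argument depends only on the trie structure and on the semantics of maximal reach pointers—both of which our RL p-position heap provides identically to the LR version—Diptarama et al.'s analysis transfers verbatim, giving the stated $O(m \log (\sigma+\pi) + m\pi + \pocc)$ query time.
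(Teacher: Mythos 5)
Your proposal takes essentially the same route as the paper: the paper gives no independent proof of this corollary, but simply observes that Diptarama et al.'s query algorithm applies unchanged to the RL p-position heap (noting only that the absence of double nodes slightly simplifies it), which is exactly the reduction you make. The extra detail you supply about the descent, the candidate characterization, and the origin of the $m\pi$ term is a reasonable sketch of the cited algorithm and does not change the argument.
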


\section{Conclusions and further work}

This paper proposed a new indexing structure for
parameterized pattern matching, called RL p-position heaps,
that are built in a right-to-left online manner.
We proposed a Weiner-type construction algorithm
for our RL p-position heaps that runs in $O(n \log (\sigma + \pi))$
time with $O(n)$ space,
for a given text p-string of length $n$
over a static alphabet $\Sigma$ of size $\sigma$
and a parameterized alphabet $\Pi$ of size $\pi$.
The key to our efficient construction is how to label
the reversed suffix links.
By augmenting our position heap with maximal reach pointers,
one can perform parameterized pattern matching
in $O(m \log(\sigma + \pi) + m \pi + \pocc)$ time,
where $m$ is the length of a query pattern and $\pocc$
is the number of occurrence to report.

Our future work includes the following:
\begin{itemize}
\item Would it be possible to shave
  the $m\pi$ term in the pattern matching time
  using parameterized position heaps?
  Other data structures such as parameterized suffix trees
  achieve better $O(m \log(\sigma + \pi) + \pocc)$ time~\cite{Baker96}.

\item Nakashima et al.~\cite{position_heaps_of_trie_2012}
  extended Ehrenfeucht et al.'s
  right-to-left position heaps~\cite{ehrenfeucht_position_heaps_2011}
  to a set of texts given as a trie.
  We are now working on extending our right-to-left p-position heaps
  to a set of texts given as a trie.
\end{itemize}

\bibliographystyle{abbrv}
\bibliography{ref}

\end{document}